\theoremstyle{plain}
\newtheorem{thm}{Theorem}[section]
\newtheorem{corollary}[thm]{Corollary}
\newtheorem{lemma}[thm]{Lemma}
\theoremstyle{definition}
\newtheorem{definition}[thm]{Definition}
\newtheorem{remark}[thm]{Remark}
\newtheorem{example}[thm]{Example}
\newcommand{\cN}{\mathcal{N}}
\newcommand{\cS}{\mathcal{S}}
\newcommand{\cC}{\mathcal{C}}
\newcommand{\cR}{\mathcal{R}}
\newcommand{\cY}{\mathcal{Y}}
\newcommand{\cX}{\mathcal{X}}
\newcommand{\R}{\mathbb{R}}
\newcommand{\Z}{{\mathbb Z}}
\renewcommand{\k}{\kappa}
\DeclareMathOperator{\im}{Im}
\DeclareMathOperator{\LT}{LT}
\DeclareMathOperator{\LM}{LM}
\DeclareMathOperator{\Rem}{rem}
\DeclareMathOperator{\Span}{span}
\DeclareMathOperator{\lex}{lex}
\begin{document}

\title{Gr\"obner bases of reaction networks with intermediate species}
\author{AmirHosein Sadeghimanesh$^1$, Elisenda Feliu$^{1,2}$}
\date{\today}

\footnotetext[1]{Department of Mathematical Sciences, University of Copenhagen, Universitetsparken 5, 2100 Copenhagen, Denmark.}
\footnotetext[2]{Corresponding author: efeliu@math.ku.dk}

	\maketitle
	\begin{abstract}
		In this work we consider the computation of  Gr\"{o}bner bases of the steady state ideal of reaction networks equipped with mass-action kinetics. Specifically, we focus on the role of \emph{intermediate} species and the relation between the extended network (with intermediate species) and the core network (without intermediate species). 

		We show that  a Gr\"{o}bner basis of the steady state ideal of the core network always lifts to a  Gr\"{o}bner basis of the steady state ideal of the extended network by means of linear algebra, with a suitable choice of monomial order. As illustrated with examples, this contributes to a substantial reduction of the computation time, due mainly to the reduction in the number of variables and polynomials. We further show that if the steady state ideal of the core network is binomial, then so is the case for the extended network, as long as an extra condition is fulfilled. For standard networks, this extra condition can be visually explored from the network structure alone.

\medskip
\emph{Keywords: } binomial ideals, mass-action kinetics, steady state ideal, invariant, Gr\"{o}bner basis

	\end{abstract}
	
	\section*{Introduction}
	
 	Parametric polynomial systems of equations arise in the natural sciences when modeling ecosystems, cell behavior, the spread of an illness, and molecular interactions within the cell, to name a few examples. In these scenarios questions of interest often boil down to describing the solutions to these systems   for varying values of the parameters. Although only non-negative solutions are typically meaningful, the standard tool in computational algebraic geometry to study algebraic varieties, namely  Gr\"{o}bner bases, has proven useful.
		  However, due to the parametric coefficients, the computation of a  reduced Gr\"{o}bner basis can be time consuming for realistic examples, which typically involve many variables and parameters.  The computation time depends mainly on the degree of the polynomials, the number of variables and coefficients, 
 the choice of the monomial order and the  used method \cite{Cox,F4,p-modular-Winkler,F5Complexity,EliminationOrders}. 
 These universal considerations target generic polynomial systems, but, in applications, the 
 structure of the particular system of interest might favor one method or one monomial order over another.

We focus on a specific type of polynomial systems that arise when modeling chemical reaction networks with mass-action kinetics  \cite{feinbergnotes,gunawardena-notes}. 
Specifically,  the evolution of the concentrations of the  species of a chemical reaction network in time is described under mass-action 
by a system of ordinary differential equations in $\R^n$
\[ \tfrac{dx_i}{dt}=f_{\k,i}(x),\qquad i=1,\dots,n\]
with $f_{\k,i}(x)$ polynomial. The monomials of each $f_{\k,i}(x)$  depend  on the reaction network structure alone, and the coefficients  depend on the \emph{reaction rate constants} $\k$, which are often unknown and thus treated as parameters.  The steady states, or equilibrium points, of the system are  the \emph{non-negative} points of the variety defined by the \emph{steady state ideal} $I_\k=\langle f_{\k,1}(x),\dots,f_{\k,n}(x)\rangle$.

 The question of restricting to non-negative steady states remains challenging and no straightforward solutions have been proposed. Despite of this, Gr\"{o}bner bases have been for example used for model discrimination \cite{Wnt-Matroid,InvariantModelDiscrimination,ComplexLinearInvariantsGunawardeena,GeometryOfMultisitePhosphorylationGunawardeena,DistributiveProcessiveMultisitePhosphorylationGunawardeena}. They are also used to decide whether the steady state ideal is binomial, that is, whether any reduced Gr\"{o}bner basis consists of polynomials with at most two terms. If this is the case, then methods to detect the existence of multiple steady states can be
applied   \cite{Dickenstein-Toric,Feliu-Sign}. 

In this work we exploit the specific structure of the steady state ideal, which reflects the structure of the reaction network, to guide the selection of good monomial orders and to compute  reduced Gr\"{o}bner bases faster. Specifically, we consider a frequent and nicely-behaved class of species introduced in \cite{Feliu-Simplifying} called \emph{intermediate species} (or intermediates, for short). Intermediates give rise to linear terms in the steady state polynomials, and they can be removed from a reaction network resulting in a smaller \emph{core network} with only the non-intermediates. A key property is that steady states of the core network can be lifted to steady states of the \emph{extended network}.

 The first main result of this work is Theorem \ref{Proposition Groebner Intermediates}, where we show how to obtain a 
	Gr\"{o}bner basis of the extended network from one of the core network using linear algebra. The result implicitly gives  good monomial orders, namely, those for which the concentration of the intermediates are larger than for the non-intermediates, and are lexicographic in the variables corresponding to the intermediates. Example \ref{Example Conradi System 169} illustrates  the computational advantage of using our approach compared with other methods. Additionally, we conclude that the analysis of the steady state ideal of the core network is sufficient for model discrimination.
	
	 The second main result, Theorem \ref{Theorem Binomiality and Intermediates},  addresses how to decide whether the steady state ideal is binomial. We show that if the steady state ideal of the core network is binomial, then this is also the case for the steady state ideal of the extended network provided an extra condition is fulfilled. In typical networks, this extra condition   can be readily checked from the network structure alone. 	 When the core network has a homogeneous steady state ideal (which happens frequently for realistic reaction networks), then one can employ the  linear algebra-based method introduced in  \cite{Conradi-Binomial} to detect whether the steady state ideal of the core network is binomial. Then, combined with our result, we obtain a faster method to address 
whether the steady state ideal of the original network is binomial, which  does not rely on the computation of a Gr\"{o}bner basis. 
	 
The key property behind our results is that intermediates define a square linear subsystem of full rank among the steady state polynomials. Its solution and posterior substitution into the remaining polynomials gives rise to a smaller ideal in the non-intermediates. A Gr\"{o}bner basis of the small ideal can then be lifted to a Gr\"{o}bner basis of the original ideal. Our approach can be theoretically applied to arbitrary parametric ideals, after detection of linear subsystems among a set of generators. However, technical conditions that are necessary for our results to hold might not be straightforward to check, since we overcome this difficulty by exploiting the network structure.

	The structure of the paper is as follows. We start by introducing reaction networks and basic concepts such as the steady state ideal. Intermediates are introduced in Section \ref{sec:intermediates}. In Section \ref{sec:groebner} we address  Gr\"{o}bner bases of networks with intermediates, discuss binomial steady state ideals and relate our work to \cite{Conradi-Binomial}.
	 In Section \ref{sec:algebraic-independent}  a technical condition of algebraic independence of a set of rational functions,  which is assumed in the former sections, is discussed.  Finally, in the last section, we discuss another class of special species, namely enzymes, that might lead to similar results concerning the computation of Gr\"{o}bner bases.

	\section{The steady state ideal of a reaction network}\label{sec:steady_state_ideal}
	
	We follow the formalism of \cite{Feliu-Simplifying}. See also \cite{feinbergnotes,gunawardena-notes} for an introduction to reaction networks.
	Subscripts $\geq 0, >0$ on $\R$ (resp. $\Z$) refer to the non-negative and positive real numbers (resp. integer numbers).
	
	A \emph{reaction network} is an ordered triple $\cN=\left(\cS,\cC,\cR\right)$ where $\cS$, $\cC$ and $\cR$ are three sets called the  set of \emph{species}, \emph{complexes} and \emph{reactions}, respectively. Here $\cS$ is a finite set and $\cC$ is a finite set of linear combinations of elements of $\cS$ with coefficients in $\mathbb{Z}_{\geq 0}$. A \emph{reaction} is an ordered pair of complexes $\left(c,c'\right)$ in $\cC^2$,  usually denoted as $c\rightarrow c'$. For the reaction $c\rightarrow c'$, the complex $c$ is called the \textit{reactant}   and $c'$ is called the \textit{product}.
	
	A digraph is associated with a reaction network as follows.  The vertex set is $\cC$ and there is a directed edge from the reactant to the product of every reaction. If both reactions $c\rightarrow c'$ and $c'\rightarrow c$ for two complexes $c$ and $c'$ exist, then the notation $c\rightleftharpoons c'$ is used and the reaction is said to be reversible.
	
Complexes that are not part of any reaction or species that are not part of any complex do not appear in the digraph. Therefore, the reaction network cannot uniquely be determined from the digraph alone. For simplicity, however,  we often introduce a reaction network by its digraph and implicitly assume that the set of complexes equals the set of vertices and the set of species consists of the species that appear in at least one complex.

Write $\cS=\{X_1,\dots,X_n\}$, such that the set of species is implicitly ordered. Then a complex $c$ is of the form $c_1X_1+\dots+c_nX_n$, which we also write in vector form as $c=(c_1,c_2,\dots,c_n)\in \mathbb{Z}_{\geq 0}^n$. 
With this representation, $c_i$ is called the \emph{stoichiometric coefficient} of $X_i$ in $c$. 
	
	\begin{example}\label{Example Reaction Network}
		Let $\cS=\{X_1,X_2,X_3,X_4\}$, $\cC=\{X_1+X_3,X_4,X_2+X_3\}$, $\cR=\{X_1+X_3\rightarrow X_4,X_4\rightarrow X_1+X_3,X_4\rightarrow X_2+X_3\}$. The network $\cN=\left(\cS,\cC,\cR\right)$ is represented with the following digraph
		\[X_1 + X_3\ce{ <=>} X_4 \ce{-> } X_2 + X_3.\]
		The complexes $X_1+X_3$ and $X_4$ appear both as reactants and products while $X_2+X_3$ appears only as a product.
	\end{example}

	We next construct a system of Ordinary Differential Equations (ODEs) that models the variation of the concentration of each species in time and introduce the relevant polynomials $F_i(x)$ that are the focus of this work.
	We denote the concentration of each species $X_i$ in lower-case $x_i$. 
For each reaction $c\rightarrow c'$, we introduce a  parameter $k_{c\rightarrow c'}$, and a polynomial $F_i(x)$ is associated with every species $X_i$  as follows:
	\begin{equation}\label{eq:Fix}
	F_i(x)=\sum_{c\rightarrow c'\in\cR}(c_i'-c_i)k_{c\rightarrow c'} \, x^c \ \in\ \mathbb{R}(k)[x],
	\end{equation}
	where $x^c= x _1^{c_1}\dots x_n^{c_n}$. 
	Here $x=(x_1,\dots,x_n)$ and $\mathbb{R}(k)$ is the field of rational functions  with variables $k_{c\rightarrow c'}$ and real coefficients. The symbol $k$ stands for  the parameter vector \[k=(k_{c\rightarrow c'} \mid c\rightarrow c'\in\cR). \]
	
	For a chosen positive value $k^\star\in\mathbb{R}^\cR_{>0}$ of the parameter  vector, we let $F_{k^\star,i}(x)\in\mathbb{R}[x]$ denote the image of $F_i(x)$ under the evaluation map 
	\[\mathbb{R}(k)\rightarrow\mathbb{R},\qquad k_{c\rightarrow c'} \mapsto k_{c\rightarrow c'}^\star.\]

With this choice of $k^\star$, the ODE system of the reaction network under \emph{mass-action kinetics} is
\begin{equation}\label{eq:ODE}
\dot{x}_i=F_{k^\star,i}(x),\quad i=1,\dots,n,\qquad x\in \R^n_{\geq 0}.
\end{equation}
The value $k^\star_{c\rightarrow c'} >0$ is called  the \emph{reaction rate constant}  of $c\rightarrow c'$ and is usually depicted as a label of the reaction in the associated digraph. 
	By \cite{Sontag:2001}, if the starting condition of \eqref{eq:ODE} belongs to $\R_{>0}^n$ (resp. $\R_{\geq 0}^n$), then so does the trajectory for all positive times in the interval of definition.
	
	The \emph{steady states} of the network are the common zeros of $F_{k^\star,i}(x)$, $i=1,\dots,n$.
	In applications, only non-negative real solutions have meaning and mostly, positive steady states are interesting, meaning all concentrations are positive.
	Since the values of the reaction rate constants are in general unknown, they are treated as parameters of the system. Thus we aim at studying the zeros of the system of polynomials $F_{i}(x)=0$, for $i=1,\dots,n$ in $\R(k)$ and specially the positive zeros after specifying values for $k$.
	
	\begin{definition}\label{Definition1.1}
		Let $\cN=(\cS,\cC,\cR)$ be a reaction network with $\cS=\{X_1,\dots,X_n\}$.
		\begin{enumerate}[(a)]
			\item $F_{i}(x)\in\mathbb{R}(k)[x]$ is called the \emph{steady state polynomial} of $X_i$.
			\item The ideal generated by the steady state polynomials of all the species in the network in the ring $\mathbb{R}(k)[x]$ is called the \emph{steady state ideal} of the network:
			\[I_\cN=\big\langle F_{i}(x)\mid i=1,\dots,n\big\rangle \ \subseteq \mathbb{R}(k)[x].\]
		\end{enumerate}
	\end{definition}

The set of steady states for a vector of reaction rate constants $k^\star$ is thus the solution set to any basis (set of generators) of $I_{\cN}$ specialized	to  $k^\star$. 
	
	It follows from \eqref{eq:Fix} and \eqref{eq:ODE}  that  for all $x\in \R^n$, the vector $$F_k(x)=(F_{k,1}(x),\dots,F_{k,n}(x))$$ lies in the vector subspace  $S =  \Span(  \{c-c' \mid c\rightarrow c'\in\cR\}) \subseteq \R^n$. If $s=\dim(S)$, then $n-s$ of the steady state polynomials can be written as linear combinations of the remaining $s$ polynomials.  We conclude that it is always possible to find a basis of $I_\cN$ with cardinality $\dim(S)$. 
	
	\begin{example}\label{Example Steady state system}
		(continued from Example \ref{Example Reaction Network})  The ODE system of the reaction network with digraph
		\[X_1 + X_3 \ce{<=>[k_1][k_2]} X_4 \ce{->[k_3]} X_2 + X_3\]
		is
\begin{align*}
		\dot{x}_1 & =  -k_1x_1x_3+k_2x_4 & 
		\dot{x}_2 & =  k_3x_4\\
		\dot{x}_3 & =  -k_1x_1x_3+k_2x_4+k_3x_4 & 
		\dot{x}_4 & =  k_1x_1x_3-k_2x_4-k_3x_4.
		\end{align*}
In this case $\dim(S)=2$, $k=(k_1,k_2,k_3)$ and the steady state ideal is
		\[I_\cN=\big\langle -k_1x_1x_3+k_2x_4,k_3x_4\big\rangle \  \subseteq \R(k)[x].\]
	\end{example}

		\section{Intermediates and steady states}\label{sec:intermediates}
	
In this subsection we introduce a special type of species of interest: intermediates.

\begin{definition}	\label{def:intermediate}
We say that $\cY\subseteq \cS$ is a subset of \emph{intermediates} if 
each $Y\in \cY$ fulfills:
\begin{itemize}
\item   $Y\in\cC$ and the stoichiometric coefficient of $Y$ in all other complexes is zero, and 
\item there exists at least one reaction having $Y$ as reactant and at least one reaction having $Y$ as product. 
\end{itemize}
Each $Y\in \cY$ is called an \emph{intermediate}.
\end{definition}

Whenever a set of intermediates $\cY$ is given, we partition  the set of species  into two disjoint subsets $\cY=\{Y_1,\dots,Y_m\}$ and $\cX=\{X_1,\dots,X_n\}$ of non-intermediates. We assume further that the set of species is ordered such that the species $Y_1,\dots, Y_m$ are first. With this convention, we let 
	$(y,x)$ denote the concentration vector of  all species: $x$ is the concentration vector of the species in $\cX$ and $y$ of the species in $\cY$. A complex is either an intermediate in $\cY$ or it contains  only non-intermediates. In the latter case we say that $c$ is a \emph{non-intermediate complex}.

Note that given $\cY$, we refer to the intermediates  of the network as the species in $\cY$, even though there might be other species in $\cX$, regarded as non-intermediates, that fulfill the two items in Definition~\ref{def:intermediate}.

	\begin{example}\label{Example Popular Intermediates}
The most common mechanism involving intermediates  is of the following form:
\[X+E\ce{->}Y\ce{->} X'+E\]
or variations of it by letting one or both reactions being reversible.
\emph{Isomerism mechanisms} among intermediates are also common: 
\[
	Y\ce{->} Y' \qquad 	Y\ce{<=>} Y'.
\]
Combination of these mechanisms yields to more elaborate networks involving intermediates, as in Examples~ \ref{Example MAPK mu} and \ref{Example Conradi System 169} below.
\end{example}

	\begin{definition}\label{Definition Intermediate} Let $\cY$ be a set of intermediates and $Y\in \cY$. 
		\begin{itemize}
			\item A non-intermediate complex $c$ is called an \emph{input} for $Y$ if there is a directed path from $c$ to $Y$ in the digraph associated with the network, such that all vertices other than $c$ belong to $\cY$.
			\item $Y$ is called an $\ell$-input intermediate if there are $\ell$ inputs for $Y$.
		\end{itemize}
	\end{definition}

	\begin{example}
Consider the following network with $\cY=\{Y_1,Y_2,Y_3\}$:
	\[X_1 + X_2 \ce{->}  Y_1 \ce{<=>} Y_2 \ce{<=>} Y_3 \ce{->}  X_3 + X_4.\]
	There are two non-intermediate complexes, $X_1+X_2$ and $X_3+X_4$. 
	The species $Y_1$, $Y_2$, $Y_3$ are all 1-input intermediates. Note that  $Y_2$ is however the product of two reactions.
	
	Consider now the following network with $\cY=\{Y\}$:
	\[X_1 + X_2 \ce{ <=>} Y \ce{<=>} X_3 + X_4.\]
	The species $Y$ is a 2-input intermediate and $X_1+X_2$ and $X_3+X_4$ are both inputs for $Y$.
	\end{example}

\subsection{Intermediates and steady states}
	Let $\widetilde{\cN}$ be a reaction network with  a set of intermediates $\cY=\{Y_1,\dots,Y_m\}$. Consider the steady state polynomials of the intermediates and denote the parameter vector of reaction rate constants by $\kappa$ (the reason why will be made clear below). 
By definition, for every intermediate $Y_i$, the variable $y_i$  is only part of the monomial $y_i$  in \eqref{eq:Fix}.  Thus, the system with $m$ equations
	\[F_{1}(y,x)=\dots=F_{m}(y,x)=0\]
	is linear in $y_1,\dots,y_m$. 
	It is shown in \cite{Feliu-Simplifying} that this system has a unique solution for fixed positive values of $\k$ and $x$, which is further positive.
The solution is of the form
\[y_i=\sum_{c\in\cC}\mu_{i,c}\, x^c,\qquad \textrm{where }\quad  \mu_{i,c}\in\mathbb{R}_{\geq 0}(\k), \quad i=1,\dots,m.\]
The explicit dependence of $\mu_{i,c}$ on $\kappa$ is omitted from the notation for simplicity. An 	explicit description of $\mu_{i,c}$ can be found using the Matrix-Tree theorem on a suitable labeled digraph, see  \cite{Feliu-Simplifying}.

	\begin{example}\label{Example Finding Mu}
		Consider the following reaction  network with $\cX=\{X_1,X_2,X_3\}$ and $\cY=\{Y_1,Y_2,Y_3\}$:
		\[\xymatrix @C=2pc @R=1.5pc{ X_1+X_2\ar@<+1pt>@{-^>}[r]^{\quad \kappa_1}\ar@<-1pt>@{_<-}[r]_{\quad \kappa_2} & Y_1\ar[r]^{\kappa_3}\ar[dr]_{\kappa_5} & Y_2\ar[r]^{\kappa_4\qquad} & 2X_2 \\
			& & Y_3\ar@<+1pt>@{-^>}[r]^{\kappa_6}\ar@<-1pt>@{_<-}[r]_{\kappa_7}\ar[ur]^{\kappa_8} & 2X_1\ar[u]_{\kappa_{9}}}\]
					The linear system in $y_1,y_2,y_3$ that the steady state polynomials of $Y_1,Y_2,Y_3$ define is:
\begin{align*}
		\kappa_1x_1x_2-(\kappa_2+\kappa_3+\kappa_5)y_1  &=  0, \\
		\kappa_3y_1-\kappa_4y_2  & =  0, \\
		\kappa_5y_1-(\kappa_6+\kappa_8)y_3+\kappa_7x_1^2  &=  0,
\end{align*}
and its solution is
\begin{align*}
	y_1 & =\tfrac{\kappa_1}{\kappa_2+\kappa_3+\kappa_5}x_1x_2, &
		y_2& =\tfrac{\kappa_1\kappa_3}{\kappa_4(\kappa_2+\kappa_3+\kappa_5)}x_1x_2, \\
		y_3& =\tfrac{\kappa_1\kappa_5}{(\kappa_6+\kappa_8)(\kappa_2+\kappa_3+\kappa_5)}x_1x_2+\tfrac{\kappa_7}{\kappa_6+\kappa_8}x_1^2.
\end{align*}
		This gives
\begin{align*}
		\mu_{1,\scriptscriptstyle X_1+X_2} & =\tfrac{\kappa_1}{\kappa_2+\kappa_3+\kappa_5}, & \mu_{1,\scriptscriptstyle 2X_1} & =0, & \mu_{1,\scriptscriptstyle 2X_2}& =0, \\
		\mu_{2,\scriptscriptstyle X_1+X_2} &=\tfrac{\kappa_1\kappa_3}{\kappa_4(\kappa_2+\kappa_3+\kappa_5)}, & \mu_{2,\scriptscriptstyle 2X_1}& =0, & \mu_{2,\scriptscriptstyle 2X_2}&=0, \\
		\mu_{3,\scriptscriptstyle X_1+X_2} & =\tfrac{\kappa_1\kappa_5}{(\kappa_6+\kappa_8)(\kappa_2+\kappa_3+\kappa_5)},& \mu_{3,\scriptscriptstyle 2X_1}& =\tfrac{\kappa_7}{\kappa_6+\kappa_8},& \mu_{3,\scriptscriptstyle 2X_2}&=0.
\end{align*}
	\end{example}
	
	\begin{example}\label{Example MAPK mu}
The following digraph corresponds to the Mitogen-Activated Protein Kinase cascade (MAPK) given in \cite{MAPK-Multistationarity}:
		\[\begin{array}{l}
		X_0+E\ce{<=>[\kappa_1][\kappa_2]} Y_1\ce{->[\kappa_3]} X_1+E\ce{<=>[\kappa_4][\kappa_5]} Y_2\ce{->[\kappa_6]} X_2+E\\
		X_2+F\ce{<=>[\kappa_7][\kappa_8]} Y_3\ce{->[\kappa_9]} Y_4\ce{<=>[\kappa_{10}][\kappa_{11}]} X_1+F\ce{<=>[\kappa_{12}][\kappa_{13}]} Y_5\ce{->[\kappa_{14}]} Y_6\ce{<=>[\kappa_{15}][\kappa_{16}]} X_0+F.
		\end{array}\]
				Species $Y_1,\dots,Y_6$    are intermediates.  
		The non-zero coefficients $\mu_{i,c}$ are:
		\begin{align*}
		\mu_{1,\scriptscriptstyle X_0+E}= & \tfrac{\kappa_1}{\kappa_2+\kappa_3}, &  \mu_{2,\scriptscriptstyle X_1+E}= & \tfrac{\kappa_4}{\kappa_5+\kappa_6}, &  \mu_{3,\scriptscriptstyle X_2+F}= & \tfrac{\kappa_7}{\kappa_8+\kappa_9}, & & \\
		 \mu_{4,\scriptscriptstyle X_2+F}= &\tfrac{\kappa_7\k_9}{(\kappa_8+\kappa_9)\k_{10}}, &
		\mu_{4,\scriptscriptstyle X_1+F}= & \tfrac{\kappa_{11}}{\kappa_{10}}, &
		\mu_{5,\scriptscriptstyle X_1+F}= & \tfrac{\kappa_{12}}{\kappa_{13}+\kappa_{14}},  \\
		 \mu_{6,\scriptscriptstyle X_1+F}= & \tfrac{\kappa_{12}\k_{14}}{(\kappa_{13}+\kappa_{14})\k_{15}}, & \mu_{6,\scriptscriptstyle X_0+F}= & \tfrac{\kappa_{16}}{\kappa_{15}}.
		\end{align*}
	\end{example}

\medskip
\subsection{Extended and core networks}
	
	\begin{definition}\label{Notation Extension via the intermediates}
		Let $\cN=(\cS,\cC,\cR)$ and $\widetilde{\cN}=(\widetilde{\cS},\widetilde{\cC},\widetilde{\cR})$ be two reaction networks. We say that $\widetilde{\cN}$ is an \emph{extension} of $\cN$ via the addition of   intermediates $Y_1,\dots,Y_m$ if
		\begin{enumerate}[(i)]
			\item $\cY=\{Y_1,\dots,Y_m\}$ is a set of intermediates of $\widetilde{\cN}$.
			\item $\cS\cup \cY\subseteq \widetilde{\cS}$ and  $\cC\cup \cY\subseteq \widetilde{\cC}$.
			\item $c\rightarrow c'\in\cR$ if and only if there is a directed path from $c$ to $c'$ in the  digraph associated with $\widetilde{\cN}$, such that all vertices other than $c$ and $c'$ belong to $\cY$ (there might be none).
		\end{enumerate}
		In this case $\cN$ is called the \emph{core network} of $\widetilde{\cN}$. 
	\end{definition}

	\begin{example}\label{Example Realization Condition}
		The core network associated with the network in Example \ref{Example Finding Mu} is:
		\[\xymatrix @C=1.25pc @R=1.25pc{ X_1+X_2\ar[rr]^{k_1}\ar[dr]_{k_2} & & 2X_2 \\
			& 2X_1\ar[ru]_{k_3} & }\]
	\end{example}

		\begin{example}\label{Example MAPK mu reduced}
The core network of the network in Example \ref{Example MAPK mu} has digraph
\[
		X_0+E \ce{->[k_1]} X_1+E \ce{->[k_2]} X_2+E\qquad 
		X_2+F  \ce{->[k_3]} X_1+F \ce{->[k_4]} X_0+F.
\]
	\end{example}

	Notations $\kappa,\widetilde{I},\widetilde{F}$ are used to address reaction rate constants, steady state ideal and steady state polynomials of the extended network respectively. This notation is fixed from now on whenever    we study extensions via the addition of intermediates.

Given $\widetilde{\cN}$  an extension of $\cN$ via the addition of intermediates $Y_1,\dots,Y_m$, we define a map 
		\[		\begin{array}{lrll}
		\phi \colon & \mathbb{R}(k) & \longrightarrow & \mathbb{R}(\kappa)\\
		& k_{c\rightarrow c'} & \longmapsto & \phi_{c\rightarrow c'}(\kappa),
		\end{array}\] 
		such that for every reaction $c\rightarrow c'\in\cR$, $\phi_{c\rightarrow c'}(\kappa)$ is the rational function
\begin{equation}\label{eq:phi}
		\phi_{c\rightarrow c'}(\kappa)=  \kappa_{c\rightarrow c'}+\sum_{i=1}^m\kappa_{Y_i\rightarrow c'}\,\mu_{i,c},
		\end{equation}
		where it is understood that $\kappa_{c\rightarrow c'} =0$,  $\kappa_{Y_i\rightarrow c'}=0$ if respectively $c\rightarrow c'$,  $Y_i\rightarrow c'$ do not belong to $\widetilde{\cR}$.  Note that $\phi_{c\rightarrow c'}(\kappa)\neq 0$ for all $c\rightarrow c'$ by Definition~\ref{Notation Extension via the intermediates}(iii) and that $\phi_{c\rightarrow c'}(\kappa)$ is a rational function with positive coefficients.

 The map $\phi$ extends to a map  
		\[\Phi\colon \mathbb{R}[k][x]\rightarrow\mathbb{R}(\kappa)[y,x].\]
		For example,  if $F_i$ is a steady state polynomial of $\cN$, $\Phi(F_i)$ is the polynomial obtained by replacing $k_{c\rightarrow c'}$ by the rational function $\phi_{c\rightarrow c'}(\kappa)$.
 	If the rational functions $\phi_{c\rightarrow c'}(\kappa)$  are algebraically independent over $\R$, then
		 $\phi$ extends to a map of polynomial rings 
		\[\Phi\colon \mathbb{R}(k)[x]\rightarrow\mathbb{R}(\kappa)[y,x].\]
		We explore in Section~\ref{sec:algebraic-independent} ways to check whether the algebraic independence condition holds, and provide types of intermediates for which it holds and no extra check is required.
		
We introduce the following polynomials 
\begin{equation}\label{eq:modified-polynomials}
H_{i}(y,x)=y_i-\sum_{c\in\cC}\mu_{i,c}\,x^c \ \in \R(\k)[y,x],\qquad i=1,\dots,m.
\end{equation}
			
	\begin{thm}\label{Theorem Systems of Intermediates}\emph{(\cite[Theorems 3.1 and 3.2]{Feliu-Simplifying})}
		Let $\widetilde{\cN}$ be an extension of $\cN$ via the addition of intermediates $Y_1,\dots,Y_m$.
		\begin{enumerate}[(i)]
			\item The coefficient $\mu_{i,c}$ is nonzero if and only if the non-intermediate complex $c$ is an input for $Y_i$ in $\widetilde{\cN}$.
			\item 
			The set of steady state polynomials of non-intermediate species and the polynomials $H_1,\dots,H_m$ in \eqref{eq:modified-polynomials} form a basis of $\widetilde{I}$.
			\item 
			$\widetilde{F}_i\Big(\sum_{c\in\cC}\mu_{1,c}\,x^c,\dots,\sum_{c\in\cC}\mu_{m,c}\,x^c,x_1,\dots,x_n\Big)=\Phi(F_i(x))$ for $i=1,\dots,n$. 
		\end{enumerate}
	\end{thm}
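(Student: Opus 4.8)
The plan is to derive all three parts from the linear structure of the intermediate subsystem. Since each $y_i$ enters the steady state polynomials only through the monomial $y_i$, the equations $\widetilde{F}=0$ for the intermediate species form a linear system $Ly=g(x)$, where $L=(L_{lj})$ is the $m\times m$ matrix over $\R(\kappa)$ with $L_{ll}=\sum_{Y_l\to w}\kappa_{Y_l\to w}$ (sum over all reactions with reactant $Y_l$) and $L_{lj}=-\kappa_{Y_j\to Y_l}$ for $j\neq l$, and $g_l(x)=\sum_{c\to Y_l}\kappa_{c\to Y_l}\,x^c$ collects the inputs from non-intermediate complexes. As recalled before the statement, this system has a unique solution, so $L$ is invertible and $y=L^{-1}g$; expanding in the monomials $x^c$ reads off the coefficients $\mu_{i,c}$. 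Substituting $y_j=\sum_c\mu_{j,c}\,x^c$ back into $(Ly)_l=g_l$ and matching the coefficient of each monomial $x^c$ yields the flux-balance relation
\[
\Big(\textstyle\sum_{Y_l\to w}\kappa_{Y_l\to w}\Big)\mu_{l,c}=\kappa_{c\to Y_l}+\sum_{j}\kappa_{Y_j\to Y_l}\,\mu_{j,c},
\]
which will be the workhorse for parts (i) and (iii).

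For part (ii), I would observe that the column vector $\mathbf{H}=(H_1,\dots,H_m)$ of polynomials in \eqref{eq:modified-polynomials} is exactly $y-L^{-1}g=-L^{-1}\mathbf{F}_{\cY}$, where $\mathbf{F}_{\cY}$ collects the steady state polynomials of the intermediates. Because $L^{-1}$ has entries in $\R(\kappa)$ and is invertible, the $H_i$ and the intermediate steady state polynomials span the same $\R(\kappa)$-submodule of $\R(\kappa)[y,x]$; hence replacing the intermediate polynomials by $H_1,\dots,H_m$ in the generating set of $\widetilde{I}$, while keeping the steady state polynomials of the non-intermediate species untouched, leaves the ideal unchanged. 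This is the shortest of the three parts.

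For part (iii), I would expand the steady state polynomial of a non-intermediate species $X_i$ according to the type of reaction: direct reactions between non-intermediate complexes contribute $(c'_i-c_i)\kappa_{c\to c'}x^c$; reactions $c\to Y_j$ consuming a non-intermediate complex contribute $-c_i\kappa_{c\to Y_j}x^c$; reactions $Y_j\to c'$ producing a non-intermediate complex contribute $c'_i\kappa_{Y_j\to c'}y_j$; and intermediate-to-intermediate reactions contribute nothing to $X_i$. After substituting $y_j=\sum_c\mu_{j,c}x^c$, I would expand $\Phi(F_i)$ using \eqref{eq:phi} and match monomials $x^c$. The direct terms coincide immediately, and the producing terms coincide with the $\sum_j\kappa_{Y_j\to c'}\mu_{j,c}$ part of $\Phi(F_i)$ once one notes, via part (i), that $\mu_{j,c}\kappa_{Y_j\to c'}\neq0$ forces $c\to c'\in\cR$, so that the constraint $c\to c'\in\cR$ can be dropped from the sum. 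The main obstacle is reconciling the consuming terms: summing the flux-balance relation over all intermediates $l$ cancels the intermediate-to-intermediate fluxes and leaves the conservation identity $\sum_l\mu_{l,c}\sum_{c'}\kappa_{Y_l\to c'}=\sum_l\kappa_{c\to Y_l}$ (with $c'$ ranging over non-intermediate complexes), which is exactly what equates the consuming terms on both sides. I expect this telescoping to be the most delicate bookkeeping step.

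Finally, for part (i) I would argue by positivity, since all $\mu_{i,c}$ and all $\kappa$ are nonnegative, so no cancellation can occur. If $c$ is an input for $Y_i$, a directed path $c\to\cdots\to Y_i$ through intermediates lets me propagate strict positivity of $\mu_{\cdot,c}$ along the path using the flux-balance relation, giving $\mu_{i,c}\neq0$. Conversely, if $c$ is not an input for $Y_i$, the intermediates not reachable from $c$ through intermediate paths satisfy a closed homogeneous subsystem of the balance relations whose matrix is a principal submatrix of $L$; as an invertible M-matrix it forces $\mu_{l,c}=0$ for all such $l$, in particular for $Y_i$. I expect the only subtlety here to be the nonsingularity of that subsystem, which follows from the standing fact that every intermediate has a directed path to a non-intermediate complex. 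Alternatively, this characterization can be read directly off the Matrix-Tree description of $\mu_{i,c}$.
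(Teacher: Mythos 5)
Your proposal is essentially correct, but note that the paper itself contains no proof of this statement: it is imported by citation from \cite[Theorems 3.1 and 3.2]{Feliu-Simplifying}, so the only comparison available is with that work. There the central tool is the Matrix-Tree theorem applied to the labeled digraph $\widehat{G}_c$, which the present paper recalls in Section \ref{sec:algebraic-independent}, Equation \eqref{eq:mu}: each $\mu_{i,c}$ is an explicit ratio of spanning-tree polynomials with positive labels, so part (i) is read off at once (no cancellation can occur, hence $\mu_{i,c}\neq 0$ iff a spanning tree rooted at $Y_i$ exists, iff $\star$ --- that is, $c$ --- reaches $Y_i$), and (ii), (iii) follow by substitution. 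Your route replaces this by elementary linear algebra: the flux-balance relation from matching coefficients of $x^c$ in $L\mu_{\cdot,c}=g$; invertibility of $L$ over $\R(\kappa)$ for part (ii), where the $H_i$ and the intermediate steady state polynomials are related by the invertible matrix $-L^{-1}$ and hence generate the same ideal together with the non-intermediate polynomials; and for part (iii) the telescoping identity $\sum_l\mu_{l,c}\sum_{c'}\kappa_{Y_l\rightarrow c'}=\sum_l\kappa_{c\rightarrow Y_l}$ (with $c'$ non-intermediate), which does follow by summing the flux-balance relations over $l$ and cancelling the intermediate-to-intermediate double sum. I checked this bookkeeping, including your use of part (i) together with Definition \ref{Notation Extension via the intermediates}(iii) to reconcile the ranges of summation on the two sides; it is right. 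What your approach buys is self-containedness; what the Matrix-Tree route buys is that part (i) and all nonsingularity questions are immediate from a single formula.

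Two points need care. First, in the converse of part (i) you must justify nonsingularity of the principal submatrix $L_S$, as you yourself flag. The ``standing fact'' you invoke --- every intermediate has a directed path through intermediates to some non-intermediate complex --- is \emph{not} part of Definition \ref{def:intermediate} of this paper, which only requires one incoming and one outgoing reaction per intermediate; it must instead be extracted from the unique-solvability statement quoted before the theorem. That extraction works: if a nonempty set $T$ of intermediates had no such path, all reactions from $T$ would stay inside $T$, the columns of $L$ indexed by $T$ would vanish outside $T$ and have zero column sums inside $T$, and $L$ would be singular, contradicting unique solvability. With this in hand, either the M-matrix argument, or a direct one (sum the homogeneous subsystem over $l\in S$, use nonnegativity of the $\mu_{l,c}$, and observe that the set where $\mu_{l,c}>0$ would be a set of intermediates with no exit), closes the gap. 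Second, you repeatedly use nonnegativity of the $\mu_{l,c}$; this is legitimate but is itself part of the prior material quoted from \cite{Feliu-Simplifying} and should be cited as such rather than treated as evident.
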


		Statements (ii) and (iii) of the previous theorem constitute the proof of the following corollary.
		
		\begin{corollary}\label{cor:basistildeI}
	Let $B$ be the set of steady state polynomials of $\cN$. Then
		\begin{equation*}\label{equ1}
		\widetilde{I} = \Big\langle \Phi(B)\cup\{H_1(y,x),\dots,H_m(y,x)\} \Big\rangle.
		\end{equation*}
		\end{corollary}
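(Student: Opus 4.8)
The plan is to combine the explicit basis of $\widetilde I$ provided by Theorem~\ref{Theorem Systems of Intermediates}(ii) with the substitution identity of part (iii), using the polynomials $H_1,\dots,H_m$ as the bridge between the two generating sets. Writing $B=\{F_1(x),\dots,F_n(x)\}$, part (ii) gives
\[
\widetilde I=\big\langle \widetilde F_1,\dots,\widetilde F_n,\,H_1,\dots,H_m\big\rangle,
\]
where $\widetilde F_1,\dots,\widetilde F_n$ are the steady state polynomials of the non-intermediate species $X_1,\dots,X_n$. Since the right-hand side of the claimed equality also contains $H_1,\dots,H_m$, it suffices to show that, modulo the ideal $\langle H_1,\dots,H_m\rangle$, each $\widetilde F_i$ agrees with $\Phi(F_i)$; the equality of the two ideals then follows by swapping one generating set for the other.

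The key observation is that each $H_j=y_j-\sum_{c\in\cC}\mu_{j,c}\,x^c$ is monic and linear in $y_j$, so the ideal $\langle H_1,\dots,H_m\rangle$ encodes exactly the substitution $y_j\mapsto \sum_{c\in\cC}\mu_{j,c}\,x^c$. Concretely, I would invoke the elementary fact that for any polynomial $P(y,x)\in\R(\k)[y,x]$ and any $g_1,\dots,g_m\in\R(\k)[x]$,
\[
P(y,x)-P(g_1,\dots,g_m,x)\ \in\ \big\langle y_1-g_1,\dots,y_m-g_m\big\rangle,
\]
which is proved by substituting the $y_j$ one at a time and telescoping, each step contributing a factor $y_j-g_j$. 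Applying this with $P=\widetilde F_i$ and $g_j=\sum_{c\in\cC}\mu_{j,c}\,x^c$, and then using part (iii) to identify $\widetilde F_i(g_1,\dots,g_m,x)=\Phi(F_i(x))$, yields
\[
\widetilde F_i-\Phi(F_i)\ \in\ \big\langle H_1,\dots,H_m\big\rangle,\qquad i=1,\dots,n.
\]

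With this congruence in hand the proof closes by a routine two-way inclusion. For $\widetilde I\subseteq\langle\Phi(B)\cup\{H_j\}\rangle$, each $\widetilde F_i=\Phi(F_i)+h_i$ with $h_i\in\langle H_1,\dots,H_m\rangle$ lies in the right-hand ideal, while the $H_j$ do so trivially. For the reverse inclusion, each $\Phi(F_i)=\widetilde F_i-h_i$ is a combination of elements of $\widetilde I$ by part (ii), and again $H_j\in\widetilde I$. I expect the only point requiring genuine care to be the telescoping substitution step: one must make sure it is applied over the coefficient field $\R(\k)$ and that the $g_j$ really lie in $\R(\k)[x]$, so that $\Phi(F_i)$ is recognized as an element of $\R(\k)[y,x]$ (in fact of $\R(\k)[x]$). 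Everything else is formal manipulation of generators.
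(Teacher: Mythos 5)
Your proof is correct and follows essentially the same route as the paper, which simply states that Theorem~\ref{Theorem Systems of Intermediates}(ii) and (iii) constitute the proof: you start from the basis in (ii), use the standard telescoping fact that $P(y,x)-P(g_1,\dots,g_m,x)\in\langle y_1-g_1,\dots,y_m-g_m\rangle$ together with (iii) to get $\widetilde F_i\equiv\Phi(F_i)$ modulo $\langle H_1,\dots,H_m\rangle$, and conclude by swapping generating sets. The only difference is that you make explicit the substitution argument the paper leaves implicit, which is a welcome addition rather than a deviation.
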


We conclude this section
 with basic properties of $\Phi$.

 \begin{lemma}\label{lem:phi-properties}
With the notation above,  assume $\phi_{c\rightarrow c'}(\kappa)$  for all  $c\rightarrow c'\in\cR$ are algebraically independent over $\R$.   Let $B=\{f_1,\dots,f_\ell\}$ and $B'=\{f'_1,\dots,f'_{\ell'}\}$ be two sets in $\R(k)[x]$.
 \begin{enumerate}[(i)]
 \item If $f\in \langle B\rangle $, then $\Phi(f)\in\langle \Phi(B)\rangle $.  
 \item If $\langle B\rangle =  \langle B'\rangle $, then $\langle \Phi(B)\rangle =  \langle \Phi(B')\rangle$. Thus $\Phi(\langle B\rangle)$ is well defined.
 \end{enumerate}
 \end{lemma}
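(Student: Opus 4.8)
The plan is to reduce everything to the single structural fact that, under the algebraic independence hypothesis, $\Phi\colon \R(k)[x]\rightarrow\R(\kappa)[y,x]$ is a genuine ring homomorphism. This is precisely what algebraic independence of the $\phi_{c\rightarrow c'}(\kappa)$ buys us: the assignment $k_{c\rightarrow c'}\mapsto \phi_{c\rightarrow c'}(\kappa)$ defines an injective $\R$-algebra map $\R[k]\to\R(\kappa)$, injectivity being the very statement that no nonzero polynomial relation holds among the $\phi_{c\rightarrow c'}(\kappa)$. This extends to a field embedding $\phi\colon \R(k)\hookrightarrow\R(\kappa)$ and then, fixing the variables $x$, to the ring map $\Phi$, exactly as recorded in the text preceding the lemma. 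Once $\Phi$ is known to be a ring homomorphism, both statements become formal, so I would open the proof by recalling this and then dispatch (i) and (ii) quickly.

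For part (i), I would simply unwind the definition of ideal membership. If $f\in\langle B\rangle$, then $f=\sum_{j=1}^{\ell}g_j f_j$ for some $g_j\in\R(k)[x]$ and $f_j\in B$. Applying $\Phi$ and using that it is additive and multiplicative gives $\Phi(f)=\sum_{j=1}^{\ell}\Phi(g_j)\,\Phi(f_j)$. Each $\Phi(g_j)$ lies in $\R(\kappa)[y,x]$ and each $\Phi(f_j)\in\Phi(B)$, so $\Phi(f)$ is an $\R(\kappa)[y,x]$-linear combination of the elements of $\Phi(B)$; hence $\Phi(f)\in\langle\Phi(B)\rangle$. The presence of the extra variables $y$ in the target causes no difficulty, since the combination may be taken with coefficients in the larger ring.

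For part (ii), I would deduce both inclusions from (i) by symmetry. Assuming $\langle B\rangle=\langle B'\rangle$, every $f'_i\in B'$ lies in $\langle B'\rangle=\langle B\rangle$, so (i) gives $\Phi(f'_i)\in\langle\Phi(B)\rangle$; as this holds for all $i$, we obtain $\langle\Phi(B')\rangle\subseteq\langle\Phi(B)\rangle$. Exchanging the roles of $B$ and $B'$ yields the reverse inclusion, whence $\langle\Phi(B)\rangle=\langle\Phi(B')\rangle$. This shows that the ideal $\langle\Phi(B)\rangle$ depends only on the ideal $\langle B\rangle$ and not on the chosen generating set, which is exactly the assertion that setting $\Phi(\langle B\rangle):=\langle\Phi(B)\rangle$ is well defined.

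The only genuine subtlety—and hence the ``hard part''—is the foundational observation in the first paragraph: one must be confident that algebraic independence is exactly the condition guaranteeing that $\Phi$ respects the field structure of the coefficients, for otherwise $\Phi$ would fail to be a homomorphism and the manipulations in (i) would be meaningless. Beyond this, the argument is purely formal; it is the standard fact that a ring homomorphism sends the ideal generated by a set into the ideal generated by the image of that set. I would therefore keep the write-up short, emphasizing the role of the hypothesis and presenting (i) and (ii) as immediate consequences of multiplicativity and additivity of $\Phi$.
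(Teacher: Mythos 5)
Your proof is correct and takes essentially the same approach as the paper's: part (i) is the identical computation using that $\Phi$ is a ring homomorphism, and part (ii) applies (i) to the elements of one basis regarded as members of the other ideal and concludes by symmetry (the paper chases an arbitrary element of $\langle\Phi(B)\rangle$ rather than checking only generators, an immaterial difference). Your opening paragraph spelling out why algebraic independence makes $\Phi$ a well-defined ring homomorphism matches what the paper records in the text preceding the lemma.
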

 \begin{proof}
 (i) Write $f= \sum_{j=1}^{\ell} \alpha_{j} f_j$ with $\alpha_j\in \R(k)[x]$. Then 
 \[\Phi(f)= \sum_{j=1}^{\ell} \Phi(\alpha_{j})\Phi(f_j)\in \langle \Phi(B)\rangle.\]

 \noindent
 (ii) It is enough to show inclusion $\subseteq$, since the other inclusion is  analogous. 
If  $g\in \langle\Phi(B)\rangle$, we have
 \[g = \sum_{i=1}^\ell \lambda_i \Phi(f_i),\qquad \lambda_i\in \R(\k)[y,x].\]
 Since $f_i\in \langle B'\rangle$, we have by (i) that   $\Phi(f_i)\in \langle \Phi(B')\rangle$.
 In particular, $g $ is an algebraic combination of the polynomials $\Phi(f_1'),\dots,\Phi(f'_{\ell'})$ with coefficients in $\R(\k)[y,x]$. Thus $g\in \langle \Phi(B')\rangle$.
 \end{proof}

	\section{Gr\"{o}bner bases and intermediates}\label{sec:groebner}

Typically, the values of the reaction rate constants are unknown and reaction networks of interest involve a considerable number of variables.
As a consequence, finding a Gr\"obner basis of the steady state ideal over the field $\R(\k)$ can be a demanding task, and sometimes even impossible with standard computers. However, the presence of intermediates, a common feature of reaction networks, can reduce the computation time substantially, by exploiting  the structure of the steady state polynomials associated with intermediates given in Theorem~\ref{Theorem Systems of Intermediates}. 	The main result of this section is Theorem~\ref{Proposition Groebner Intermediates}. Example \ref{Example Conradi System 169} illustrates how the computation time can be reduced by applying our results.

	We start with some concepts from computational algebraic geometry.

\subsection{Monomial orders and Gr\"obner bases}
We follow the notation on Gr\"obner bases from \cite{Cox}. 
We  give here a brief overview of the results required in this text.
	
Given a monomial order on $R=K[x_1,\dots,x_n]$, let $\LM(f)$ and $\LT(f)$
denote respectively the \emph{leading monomial} and \emph{leading term} of $f$.  That is,   $\LT(f)=\alpha \LM(f)$  if $\alpha$ is the coefficient of the greatest monomial of $f$. 
Then, for a subset $A\subseteq R$,  one defines 
$\LT(A)=\big\{\LT(f)\mid f\in A\big\}$ and 
$\LM(A)=\big\{\LM(f)\mid f\in A\big\}.$
Clearly, 
\begin{equation}\label{eq:monomial}
\langle\LT(A)\rangle=\langle\LM(A)\rangle.
\end{equation}

For an ideal $I$, the \emph{initial ideal} is the ideal generated by the leading terms of the elements of $I$, $\langle\LT(I)\rangle$. 
A subset $G\subseteq I$ is called a \emph{Gr\"{o}bner basis} for $I$ if 
\[\big\langle\LT(I)\big\rangle=\big\langle\LT(G)\big\rangle,\qquad (\text{equiv. }\ \big\langle\LM(I)\big\rangle=\big\langle\LM(G)\big\rangle). \]
 A Gr\"{o}bner basis is a basis of $I$ as well. Further, $G$ is a \emph{reduced Gr\"{o}bner basis} if additionally for every element $g\in G$ none of its terms can be divided by the leading monomial of an element in $G-\{g\}$, and the coefficient of $\LM(g)$ is $1$.

Whether a basis of an ideal is a Gr\"obner basis depends on the chosen monomial order.
Given an ideal and a monomial order, the Gr\"{o}bner basis is not unique but there is a unique reduced Gr\"obner basis (see \cite{Cox}).

We will use the following lemma, which follows from Lemma 2.3.1 and Theorem 2.3.2 of \cite{MonomialIdeals-HerzogHibi}.

\begin{lemma}\label{lem:relatively-prime}
Let $B$ be a basis of $I$. If the leading monomials of every pair $f,g\in B$ are relatively prime, then $B$ is a Gr\"obner basis.
\end{lemma}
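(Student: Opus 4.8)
The plan is to reduce the statement to Buchberger's criterion, which is Theorem~2.3.2 of \cite{MonomialIdeals-HerzogHibi}: a finite basis $B$ of $I$ is a Gr\"obner basis if and only if for every pair $f,g\in B$ the $S$-polynomial
\[
S(f,g)=\frac{\operatorname{lcm}(\LM(f),\LM(g))}{\LT(f)}\,f-\frac{\operatorname{lcm}(\LM(f),\LM(g))}{\LT(g)}\,g
\]
reduces to zero on division by $B$. Thus it suffices to show that the coprimality hypothesis forces $S(f,g)\to 0$ for every such pair, which is exactly the content of Lemma~2.3.1 of \cite{MonomialIdeals-HerzogHibi}; below I indicate why it holds.

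First I would normalize: dividing each of $f,g$ by its leading coefficient changes neither $\langle B\rangle$ nor the reduction process, so I may assume $\LT(f)=\LM(f)$ and $\LT(g)=\LM(g)$. Write $f=\LM(f)+f'$ and $g=\LM(g)+g'$, where every monomial of $f'$ is strictly smaller than $\LM(f)$ and every monomial of $g'$ is strictly smaller than $\LM(g)$ in the monomial order. Since $\gcd(\LM(f),\LM(g))=1$ we have $\operatorname{lcm}(\LM(f),\LM(g))=\LM(f)\LM(g)$, so
\[
S(f,g)=\LM(g)\,f-\LM(f)\,g=\LM(g)\,f'-\LM(f)\,g'=f'g-g'f,
\]
where the two top terms $\LM(f)\LM(g)$ cancel, and the last equality is the algebraic identity obtained by substituting $\LM(f)=f-f'$ and $\LM(g)=g-g'$.

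The step I expect to be the main obstacle is the final one: concluding from the representation $S(f,g)=f'g-g'f$ that $S(f,g)$ actually reduces to zero, rather than merely lying in $\langle f,g\rangle$. For this I would show that $f'g-g'f$ is a \emph{standard representation} in terms of $\{f,g\}$, i.e. that $\LM(f'g)$ and $\LM(g'f)$ are both at most $\LM(S(f,g))$. Because leading monomials are multiplicative, $\LM(f'g)=\LM(f')\LM(g)$ and $\LM(g'f)=\LM(g')\LM(f)$, and these two monomials are distinct: if they agreed, then $\LM(f)\mid \LM(f')\LM(g)$, and coprimality of $\LM(f)$ and $\LM(g)$ would give $\LM(f)\mid\LM(f')$, forcing $\LM(f)\leq\LM(f')$ and contradicting $\LM(f')<\LM(f)$ (the cases $f'=0$ or $g'=0$ being immediate, as then $S(f,g)$ is a monomial multiple of $f$ or $g$). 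Hence no cancellation occurs among the top terms, $\LM(S(f,g))=\max\{\LM(f')\LM(g),\,\LM(g')\LM(f)\}$, and each summand has leading monomial at most $\LM(S(f,g))$. A standard representation guarantees that the division algorithm applied to $S(f,g)$ with divisors $f,g$ terminates with remainder $0$, so $S(f,g)\to 0$ on division by $B$. By Buchberger's criterion, $B$ is then a Gr\"obner basis.
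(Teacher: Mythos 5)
Your overall route coincides with the paper's: the paper disposes of this lemma by citing Lemma 2.3.1 and Theorem 2.3.2 of \cite{MonomialIdeals-HerzogHibi} (the coprime $S$-pair lemma and Buchberger's criterion), and you invoke the latter and supply a proof of the former. Your algebra is correct: after normalizing to monic $f=\LM(f)+f'$ and $g=\LM(g)+g'$, coprimality gives $S(f,g)=\LM(g)f'-\LM(f)g'=f'g-g'f$, and your divisibility argument correctly shows $\LM(f')\LM(g)\neq\LM(g')\LM(f)$, so the top terms of the two summands cannot cancel and $S(f,g)=f'\,g+(-g')\,f$ is a standard representation.

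The one flaw is your final inference: it is \emph{not} true that a standard representation guarantees that the division algorithm terminates with remainder $0$. For instance, with lex $x>y$ and divisors $(xy+1,\ y^2-1)$, the polynomial $xy^2-x=x\,(y^2-1)$ has a standard representation with zero remainder, yet the division algorithm produces the remainder $-x-y$. This does not sink the proof, because the notion of ``reduces to $0$'' appearing in Theorem 2.3.2 of \cite{MonomialIdeals-HerzogHibi} is by definition the existence of \emph{some} standard expression with zero remainder, not the output of the deterministic division algorithm, and such an expression is exactly what you construct. So your argument is correct once the detour through the division algorithm is deleted and you conclude directly from this (refined) form of Buchberger's criterion.
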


   All monomial orders 
are defined via a matrix in the following way (though not all matrices $M$ define a monomial order in this way, \cite{ROBBIANO:1985jc,Cox}). For $M\in \R^{n\times n}$  with full rank, the associated order fulfills $x^{c_1}>x^{c_2}$ if the first non-zero entry of the vector $M(c_1-c_2)$ is positive 

A typical order is the  lexicographic monomial order, \emph{lex}. After choosing a variable order  $x_{a_1}>\dots>x_{a_n}$, $\lex(x_{a_1},\dots,x_{a_n})$ is the order defined by the matrix with 1 in positions $(i,a_i)$ for all $i=1,\dots,n$ and zero otherwise. 

Another monomial order of interest is the graded reverse-lexicographic order, abbreviated \emph{grevlex}. With this order, 
$x^{c_1}>x^{c_2}$ if the total degree of the first monomial is larger than the second. If they are equal, then the monomial with the smallest variable with least exponent is the greatest one. 
Grevlex with order of variables $x_1>\dots>x_n$ is defined by the matrix 
	{\small \[
	\left(\begin{array}{rrrrr}
	1 & 1 & \dots & 1 & 1\\
	0 & 0 & \dots & 0 & -1\\
	0 & 0 & \dots & -1 & 0\\
	\vdots & \vdots & \ddots & \vdots & \vdots\\
	0 & -1 & \dots & 0 & 0
	\end{array}\right).\]}

	The choice of order plays an important role in the computation time for Gr\"obner bases, performing lex typically worse than  grevlex. However, lex, as any other elimination type order, has a crucial property on elimination of variables. Given a partitioning of the set of variables, $\{x_1,\dots,x_n\}=\{x_{j_1},\dots,x_{j_{n-s}}\}\cup\{x_{i_1},\dots,x_{i_s}\}$, a monomial order is of elimination type if  $x_{j_\ell}$, for $\ell=1,\dots,n-s$, is larger than any monomial in $K[x_{i_1},\dots,x_{i_s}]$ \cite[\S 3.1, Exercise 5]{CoxElementary}. Clearly, $\lex(x_{j_1},\dots,x_{j_{n-s}}, x_{i_1},\dots,x_{i_s})$ is of elimination type. 
		If $G$ is a Gr\"{o}bner basis of $I$ with respect to an elimination type order as above,  then $G\cap K[x_{i_1},\dots,x_{i_s}]$ is a Gr\"{o}bner basis  of $I\cap K[x_{i_1},\dots,x_{i_s}]$ with respect to the induced monomial on  $K[x_{i_1},\dots,x_{i_s}]$, which  for lex is $\lex(x_{i_1},\dots,x_{i_s})$.

	\subsection{Gr\"obner bases and intermediates}
In this subsection we fix a reaction network $\cN$ and an extension $\widetilde{\cN}$  via the addition of intermediates $Y_1,\dots,Y_m$.
We show that any Gr\"obner basis of the steady state ideal of $\cN$ can be extended to one of $\widetilde{\cN}$ by simply adding the polynomials  $H_1,\dots,H_m$  given in Equation~\eqref{eq:modified-polynomials}.
By default, we order the variables $y_1>\dots>y_m>x_1>\dots>x_n$.
	We start with some general lemmas.
	
	\begin{lemma}\label{Lemma2.1}
		Let $I=\langle f_0,f_1,\dots,f_s\rangle \subseteq K[y,x_1,\dots,x_n]$ be an ideal  such that $f_i\in K[x_1,\dots,x_n]$ for $i=1,\dots,s$ and $f_0=y+f_0'$, with $f_0'\in K[x_1,\dots,x_n]$. 
		Consider a monomial order defined by a matrix $M$ whose first row is $\begin{pmatrix}
		1 & 0 & \dots & 0
		\end{pmatrix}$. Then
		 \[\big\langle\LT(I)\big\rangle=\big\langle y\big\rangle+\big\langle\LT(\langle f_1,\dots, f_s\rangle)\big\rangle.\]
		 
		 Further given $G\subseteq K[x_1,\dots,x_n]$, $G$ is a Gr\"{o}bner basis of $\langle f_1,\dots,f_s\rangle$  if and only if $\{f_0\} \cup G$ is  a Gr\"{o}bner basis of $I$.
	\end{lemma}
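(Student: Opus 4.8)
The plan is to reduce everything to two facts: that $\LT(f_0)=y$, and that $y$ is genuinely eliminated, i.e.\ $I\cap K[x_1,\dots,x_n]=\langle f_1,\dots,f_s\rangle$. Write $K[x]:=K[x_1,\dots,x_n]$ and $J:=\langle f_1,\dots,f_s\rangle\subseteq K[x]$. First I would record that because the first row of $M$ is $\begin{pmatrix}1&0&\dots&0\end{pmatrix}$, the order compares monomials first by their degree in $y$; since $f_0=y+f_0'$ with $f_0'$ free of $y$, the monomial $y$ dominates every monomial of $f_0'$, so $\LT(f_0)=y$. More generally, for any $g\in K[y,x]$ with $\deg_y(g)=d$, writing $g=\sum_j c_j(x)\,y^j$, every monomial of $c_d(x)y^d$ beats every monomial of strictly smaller $y$-degree (the first coordinate of the relevant difference vector is $d-e>0$), whence $\LM(g)=y^d\,\LM(c_d)$ for the order induced on $K[x]$. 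In particular $y\mid\LM(g)$ whenever $d\ge 1$.

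Next I would prove the elimination identity $I\cap K[x]=J$. The inclusion $J\subseteq I\cap K[x]$ is clear. For the reverse, consider the substitution homomorphism $\sigma\colon K[y,x]\to K[x]$ fixing each $x_i$ and sending $y\mapsto -f_0'$; then $\sigma(f_0)=0$ and $\sigma(f_i)=f_i$ for $i\ge 1$, so $\sigma(I)=J$, and any $h\in I\cap K[x]$ satisfies $h=\sigma(h)\in J$.

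With these in hand, the initial-ideal formula follows by examining an arbitrary $0\ne g\in I$: if $\deg_y(g)\ge 1$ then $\LT(g)\in\langle y\rangle$ by the first step, while if $g\in K[x]$ then $g\in I\cap K[x]=J$, so $\LT(g)\in\langle\LT(J)\rangle$. This yields $\langle\LT(I)\rangle\subseteq\langle y\rangle+\langle\LT(J)\rangle$, and the reverse inclusion is immediate since $y=\LT(f_0)\in\LT(I)$ and $J\subseteq I$.

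Finally, for the Gr\"obner basis equivalence, the forward direction drops out of the initial-ideal formula: if $G$ is a Gr\"obner basis of $J$ then $\langle\LT(G)\rangle=\langle\LT(J)\rangle$ and $\LT(\{f_0\}\cup G)=\{y\}\cup\LT(G)$, so $\langle\LT(\{f_0\}\cup G)\rangle=\langle y\rangle+\langle\LT(J)\rangle=\langle\LT(I)\rangle$; since $\{f_0\}\cup G\subseteq I$, it is a Gr\"obner basis. For the converse I would invoke that the order is of elimination type for $y$ (every monomial involving $y$ exceeds every monomial in $K[x]$): the elimination property recalled in the text then shows that $(\{f_0\}\cup G)\cap K[x]=G$ is a Gr\"obner basis of $I\cap K[x]=J$, using that $f_0\notin K[x]$. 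The main thing to get right is the bookkeeping on leading monomials under the matrix order — verifying that $\deg_y(g)\ge 1$ forces $y\mid\LM(g)$ and that the order induced on $K[x]$ is exactly the one used for $J$ — since everything else is formal once $\LT(f_0)=y$ and $I\cap K[x]=J$ are established.
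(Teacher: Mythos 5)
Your proof is correct, but it takes a genuinely different route from the paper's. The paper picks a reduced Gr\"obner basis $G'$ of $\langle f_1,\dots,f_s\rangle$, asserts that the leading terms of $\{f_0\}\cup G'$ are pairwise relatively prime, and applies Lemma~\ref{lem:relatively-prime} to conclude that $\{f_0\}\cup G'$ is a Gr\"obner basis of $I$; the displayed formula for $\langle\LT(I)\rangle$ follows, and the second part is then pure monomial-ideal bookkeeping (an equality $\langle y\rangle + A=\langle y\rangle + B$, with $A,B$ generated by monomials in the $x_i$ alone, forces $A=B$). You instead establish $\LT(f_0)=y$ together with the elimination identity $I\cap K[x_1,\dots,x_n]=\langle f_1,\dots,f_s\rangle$ (via the substitution $y\mapsto -f_0'$), deduce the formula for $\langle\LT(I)\rangle$ by a direct case analysis on $\deg_y$, get the forward implication immediately, and settle the converse with the elimination theorem for elimination-type orders, which the paper recalls in the same section. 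Each approach buys something. Yours is more robust at one delicate point: the paper's intermediate claim that the leading terms of a reduced Gr\"obner basis are relatively prime with each other is false in general (e.g.\ $\{x_1^2,x_1x_2\}$ is the reduced Gr\"obner basis of the ideal it generates, and its leading terms share the factor $x_1$); the paper's conclusion still stands because one only needs coprimality of $\LT(f_0)=y$ with each $\LT(g)$ for $g\in G'$, combined with the fact that $G'$ is already a Gr\"obner basis (a standard refinement of Buchberger's criterion), but your case analysis sidesteps the issue entirely. Conversely, the paper's proof of the second part is more elementary than yours, which leans on the elimination theorem; in exchange, your argument yields $I\cap K[x_1,\dots,x_n]=\langle f_1,\dots,f_s\rangle$ as a byproduct, which is precisely the kind of statement the paper exploits later in Corollary~\ref{Corollary Elimination Intermediates}.
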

	
	\begin{proof}
	By the choice of monomial order, the monomial $y$ is larger than any monomial not involving $y$.
	Consider a reduced Gr\"{o}bner basis $G'$ of $\langle f_1,\dots,f_s\rangle$. Then the leading terms of the elements in $G'$ are relatively prime with each other and with the leading term of $f_0$. 
	Since $\{f_0\} \cup G'$ is a basis of $I$, then by Lemma~\ref{lem:relatively-prime} 
$\{f_0\} \cup G'$ is a Gr\"{o}bner basis of $I$. Now, the initial ideal of $I$ is generated by the leading terms of $\{f_0\} \cup G'$. So:
			\begin{align*}
		\langle\LT(I)\rangle & =\langle\LT(\{f
		_0\}\cup G')\rangle  =\langle \{y\}\cup\LT(G')\rangle  =\langle y\rangle+\langle\LT(G')\rangle\\
		& =\langle y\rangle+\langle\LT(\langle f_1,\dots,f_s\rangle)\rangle.
		\end{align*}
		This proves the first part of the lemma. 

		For the second part, 
note that 
 \[\langle y\rangle+\langle\LT(G)\rangle= \langle \{\LT(f_0)\} \cup \LT(G)\rangle=\langle\LT(\{f_0\} \cup G)\rangle .\]
Using this equality and the first part of the lemma, we have
$\{f_0\} \cup G$ is  a Gr\"{o}bner basis of $I$ if and only if 
$\langle y\rangle+\langle\LT(G)\rangle =\langle y\rangle+\langle\LT(\langle f_1,\dots,f_s\rangle)\rangle.$ 
Since $y$ is not part of any polynomial in $G$, this equality holds if and only if 
$\langle\LT(G)\rangle  = \langle\LT(\langle f_1,\dots,f_s\rangle)\rangle$, i.e. $G$ is  a Gr\"{o}bner basis of $\langle f_1,\dots,f_s\rangle$.
	\end{proof}

	Recall that we  write $I \subseteq\mathbb{R}(k)[x_1,\dots,x_n]$  and $\widetilde{I}\subseteq \mathbb{R}(\k)[y_1,\dots,y_m,x_1,\dots,x_n]$ for the steady state ideals of $\cN$ and $\widetilde{\cN}$ respectively.
	For the rest of the section, we assume that  the rational functions $\phi_{c\rightarrow c'}(\kappa)$  are \textbf{algebraically independent} over $\R$, such that $\Phi(A)$ is defined for all subsets $A$ of $\R(k)[x]$.

	 For an arbitrary basis $B$ of $I$, define
		\begin{equation}\label{Equation B tilde}
	\widetilde{B}=\Phi(B)\cup \big\{H_1(y,x),\dots,H_m(y,x) \big\}\ \subseteq\mathbb{R}(\kappa)[y,x].
	\end{equation}

	\begin{lemma}\label{Lemma Basis Intermediates}
		If $B$ is a basis of $I\subseteq\mathbb{R}(k)[x]$, then $\widetilde{B}$ is a basis of $\widetilde{I}\subseteq\mathbb{R}(\kappa)[y,x]$.
			\end{lemma}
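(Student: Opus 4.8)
The plan is to reduce the statement to the special case already recorded in Corollary~\ref{cor:basistildeI}, and then transport it to an arbitrary basis using the well-definedness of $\Phi$ on ideals from Lemma~\ref{lem:phi-properties}(ii).

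First I would fix the distinguished basis $B_0=\{F_1(x),\dots,F_n(x)\}$ consisting of \emph{all} steady state polynomials of the core network $\cN$ (these lie in $\mathbb{R}(k)[x]$, since $\cN$ has no intermediates). For this particular basis, Corollary~\ref{cor:basistildeI} gives directly $\widetilde{I}=\big\langle \Phi(B_0)\cup\{H_1,\dots,H_m\}\big\rangle=\langle \widetilde{B_0}\rangle$, so the lemma holds for $B_0$. It then remains only to upgrade this to an arbitrary basis $B$ of $I$.

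Next, since $B$ and $B_0$ both generate $I$ in $\mathbb{R}(k)[x]$, we have $\langle B\rangle=\langle B_0\rangle$, and Lemma~\ref{lem:phi-properties}(ii) yields the equality of ideals $\langle \Phi(B)\rangle=\langle \Phi(B_0)\rangle$ in $\mathbb{R}(\kappa)[y,x]$. The remaining step is purely formal: using that the ideal generated by a union of sets is the sum of the ideals they generate, I would conclude
\[
\langle \widetilde{B}\rangle=\langle \Phi(B)\rangle+\langle H_1,\dots,H_m\rangle=\langle \Phi(B_0)\rangle+\langle H_1,\dots,H_m\rangle=\langle \widetilde{B_0}\rangle=\widetilde{I}.
\]

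I do not expect a genuine obstacle here, as the substance is entirely carried by Corollary~\ref{cor:basistildeI} and Lemma~\ref{lem:phi-properties}(ii); the statement is essentially a compatibility check. The only points demanding care are bookkeeping ones: making sure every ideal is read in the correct ring $\mathbb{R}(\kappa)[y,x]$ (note that $\Phi(B)\subseteq\mathbb{R}(\kappa)[x]$, but its generated ideal is taken inside the larger ring containing the $y_i$), and confirming that $B$ may be assumed finite so that Lemma~\ref{lem:phi-properties} applies verbatim, which is automatic since $\mathbb{R}(k)[x]$ is Noetherian.
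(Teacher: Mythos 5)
Your proposal is correct and follows the paper's own proof essentially verbatim: both reduce to the distinguished basis of steady state polynomials (your $B_0$, the paper's $B'$), apply Corollary~\ref{cor:basistildeI} to that basis, and then transfer to an arbitrary basis $B$ via Lemma~\ref{lem:phi-properties}(ii). Your added bookkeeping remarks (the ambient ring $\mathbb{R}(\kappa)[y,x]$ and finiteness of $B$) are sound but not needed beyond what the paper already does.
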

	\begin{proof}
		Let $B'$ be the set of steady state polynomials of $\cN$. By Corollary~\ref{cor:basistildeI}
		\[\widetilde{I}=\Big\langle \Phi(B')\cup\{H_1(y,x),\dots,H_m(y,x)\} \Big\rangle.\]
		Let now $B$ be an arbitrary basis of $I$. Then $\langle B\rangle=I=\langle B'\rangle$ and thus by Lemma~\ref{lem:phi-properties}(ii),
		$\langle\Phi(B)\rangle=\langle\Phi(B')\rangle$.
		Therefore
		\[\Big\langle\Phi(B)\cup\{H_1(y,x),\dots,H_m(y,x)\} \Big\rangle = \Big\langle\Phi(B')\cup\{H_1(y,x),\dots,H_m(y,x)\}\Big\rangle=\widetilde{I}.\]
		This  completes the proof.
					\end{proof}

Let $\Rem(p,q)$ be the remainder of the division of the polynomial $p$ by $q$.

	\begin{thm}\label{Proposition Groebner Intermediates}
		Fix a monomial order on $\mathbb{R}(k)[x]$ associated with an $n\times n$ matrix $Q$, and let $G$ be a Gr\"{o}bner basis of $I$ with this order.
				 Then, $\widetilde{G}$ is a Gr\"{o}bner basis of $\widetilde{I}$ with the monomial order on $\mathbb{R}(\kappa)[y,x]$ associated with the matrix
				\begin{equation}\label{equ2}
		\widetilde{Q}=\begin{pmatrix}
		{\rm Id}_m & 0 \\
		0 & Q
		\end{pmatrix},
		\end{equation}
		where ${\rm Id}_m$ is the identity matrix of size $m$.
		
		If $G$ is reduced, then $\Phi(G)\cup \Big\{y_i-\Rem\big(\sum_{c\in\cC}\mu_{i,c}x^c,\Phi(G)\big)\Big\}$ is the reduced Gr\"{o}bner basis of $\widetilde{I}$.
	\end{thm}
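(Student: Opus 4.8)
The plan is to reduce the multivariable statement about $\widetilde{G}$ to an iterated application of Lemma~\ref{Lemma2.1}, peeling off one intermediate variable at a time. The matrix $\widetilde{Q}$ has the crucial feature that its first $m$ rows, restricted to the $y$-block, form the identity, so the induced order makes $y_1 > \dots > y_m$ larger than every monomial in the $x$-variables, and moreover $y_1$ is lexicographically largest. First I would observe that $\Phi(G)$ and the $H_i$ satisfy the structural hypotheses: each $H_i = y_i - \sum_c \mu_{i,c}x^c$ is of the form $y_i + (\text{polynomial in } y_{i+1},\dots,y_m,x)$, and in fact (since the $\mu_{i,c}$ involve only $x$) $H_i = y_i + (\text{polynomial in } x)$ after the substitution is understood. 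Each $\Phi(g)$ for $g\in G$ lies in $\R(\k)[x]$, involving no $y$'s at all.

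The key step is the induction. Working in $\R(\k)[y_1,\dots,y_m,x]$, I would set up Lemma~\ref{Lemma2.1} with $y = y_1$ and the "remaining" ideal generated by $\{H_2,\dots,H_m\}\cup\Phi(G)$ inside $\R(\k)[y_2,\dots,y_m,x]$. The first row of $\widetilde{Q}$ is exactly $(1,0,\dots,0)$, matching the hypothesis of Lemma~\ref{Lemma2.1}, and $H_1 = y_1 + H_1'$ with $H_1'\in\R(\k)[x]\subseteq\R(\k)[y_2,\dots,y_m,x]$. Lemma~\ref{Lemma2.1} then tells us $\{H_1\}\cup G^{(1)}$ is a Gröbner basis of the full ideal if and only if $G^{(1)}$ is a Gröbner basis of the smaller ideal. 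Iterating, I strip off $y_2$ (whose defining matrix row, after deleting the first coordinate, is again of the required form), and so on down to $y_m$, leaving $\Phi(G)$ as a candidate Gröbner basis of $\langle\Phi(G)\rangle$ in $\R(\k)[x]$ under the order given by $Q$. That last fact is precisely the hypothesis that $G$ is a Gröbner basis of $I$, transported through $\Phi$: since $\Phi$ is induced by the substitution $k_{c\to c'}\mapsto\phi_{c\to c'}(\k)$ which acts only on coefficients and fixes the monomials in $x$, the leading monomials of $\Phi(g)$ and $g$ coincide, so $\Phi(G)$ is a Gröbner basis of $\langle\Phi(G)\rangle$ with order $Q$ exactly when $G$ is one of $I$. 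Combined with Lemma~\ref{Lemma Basis Intermediates}, which guarantees $\widetilde{B}=\widetilde{G}$ actually generates $\widetilde{I}$, this proves the first assertion.

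The main obstacle I anticipate is the bookkeeping of the induction: at each stage I must verify that the submatrix of $\widetilde{Q}$ governing the order on the remaining variables still has the block form required by Lemma~\ref{Lemma2.1}, i.e.\ that deleting the top row and left column of $\widetilde{Q}$ leaves a matrix whose top row is $(1,0,\dots,0)$ in the next coordinate block. This holds because $\widetilde{Q}$ is block-diagonal with $\mathrm{Id}_m$ and $Q$, so the relevant submatrices are the block-diagonal matrices $\mathrm{Id}_{m-j}\oplus Q$, and their first rows are always the appropriate unit vectors. I would state this explicitly as the inductive invariant rather than re-deriving it at each step. A subtle point to flag is that the order induced by $\widetilde{Q}$ on $\R(\k)[y_2,\dots,y_m,x]$ must be shown to coincide with the order defined by the deleted submatrix, which is immediate from the definition of a matrix order (the first nonzero entry of $\widetilde{Q}(c_1-c_2)$ for exponent vectors supported away from $y_1$ ignores the first row and column).

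For the reduced statement, I would argue as follows. The reduced Gröbner basis is unique, so it suffices to exhibit one. Starting from $\widetilde{G}=\Phi(G)\cup\{H_1,\dots,H_m\}$, I note that the elements $\Phi(g)$ already form the reduced Gröbner basis of $\langle\Phi(G)\rangle$ in the $x$-variables, because $G$ is reduced and $\Phi$ preserves both leading monomials (hence the non-divisibility conditions among the $\Phi(g)$) and, being a coefficient substitution, can be normalized to keep leading coefficients equal to $1$. Each $H_i$ has leading monomial $y_i$, which is coprime to every other leading monomial in $\widetilde{G}$, so the $H_i$ need no reduction against one another nor against $\Phi(G)$ in their leading terms; the only failure of reducedness is that the tail $\sum_c\mu_{i,c}x^c$ of $H_i$ may contain monomials divisible by some $\LM(\Phi(g))$. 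Replacing that tail by its remainder $\Rem(\sum_c\mu_{i,c}x^c,\Phi(G))$ removes exactly these, and since $\Phi(G)$ is a Gröbner basis this remainder is well-defined and lies in the span of normal-form monomials. The resulting set $\Phi(G)\cup\{y_i-\Rem(\sum_c\mu_{i,c}x^c,\Phi(G))\}$ is therefore a Gröbner basis (it differs from $\widetilde{G}$ only by Gröbner-basis-preserving reductions) that satisfies all the reducedness conditions, hence equals the unique reduced Gröbner basis of $\widetilde{I}$.
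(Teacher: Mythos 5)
Your overall skeleton matches the paper's own proof: generation of $\widetilde{I}$ comes from Lemma~\ref{Lemma Basis Intermediates}, the intermediates are stripped off by an (in your case explicit) iteration of Lemma~\ref{Lemma2.1}, and everything reduces to the single claim that $\Phi(G)$ is a Gr\"obner basis of $\langle\Phi(G)\rangle$ under the transported order. Your bookkeeping of the block structure of $\widetilde{Q}$, and your check that the order induced on the remaining variables is the one given by the deleted submatrix, are correct and in fact more explicit than the paper, which leaves the iteration implicit. Your argument for the reduced-basis statement is likewise a fleshed-out version of the paper's one-line remark and is fine.

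The genuine gap is at the crux to which you reduce everything. You dispose of the claim ``$\Phi(G)$ is a Gr\"obner basis of $\langle\Phi(G)\rangle$'' in one clause: since $\Phi$ acts only on coefficients, $\LM(\Phi(g))=\LM(g)$, ``so'' the Gr\"obner property transfers. Preservation of the leading monomials of the \emph{generators} does not by itself give this. The ideal $\langle\Phi(G)\rangle$ lives in $\R(\kappa)[y,x]$, whose coefficient field is strictly larger than the image field $\R(\phi(k))$; note also that even the injectivity of $\Phi$ on coefficients (hence the nonvanishing needed for $\LM(\Phi(g))=\LM(g)$) is exactly the standing algebraic-independence assumption, which your step uses silently. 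The extended ideal contains new elements --- $\R(\kappa)[y,x]$-combinations of the $\Phi(g)$ --- and what must be shown is that none of them has a leading monomial outside $\langle\LM(\Phi(G))\rangle$. That is the statement that Gr\"obner bases are stable under extension of the coefficient field: true and standard (e.g.\ via Buchberger's criterion, since the S-polynomial reductions certifying that $G$ is a Gr\"obner basis take place entirely within the subfield), but it is a theorem that must be invoked or proved, not a formal consequence of leading-monomial preservation. This is precisely the point on which the paper spends most of its proof: it first establishes the claim for a \emph{reduced} Gr\"obner basis $G'$ by applying Lemma~\ref{lem:relatively-prime} to $\Phi(G')$ directly in the big ring, and then transfers to an arbitrary $G$ via the equality $\langle\LM(\Phi(G))\rangle=\langle\LM(\Phi(G'))\rangle$ of monomial ideals together with Lemma~\ref{lem:phi-properties}(ii). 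Your proof is repaired by citing field-extension stability of Gr\"obner bases or by reproducing such a two-step argument; as written, the central step is asserted rather than proved.
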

	\begin{proof}  
	First note that by the monomial order given by $\widetilde{Q}$, we have $y_1>\dots>y_m>x_i$ for all $i=1,\dots,n$.  
Also, the polynomial $H_i$ has degree one in $y_i$ and  none of the elements of $\Phi(G)\cup\{H_j \mid j\neq i \}$ involves $y_i$. 

Let us assume we have shown that $\Phi(G)$ is a Gr\"obner basis of $\langle \Phi(G)\rangle$ with the given order, 
that is 
\begin{equation} \label{eq:phiGgrobner}
\big\langle\LT(\langle \Phi(G)\rangle)\big\rangle = \big\langle\LT(\Phi(G))\big\rangle.
\end{equation}
Then by Lemmas \ref{Lemma2.1} and  \ref{Lemma Basis Intermediates}, $\Phi(G) \cup \{ H_1(y,x),\dots,H_m(y,x)\}$ is a Gr\"obner  basis of $\widetilde{I}$.
Therefore the first part of the statement holds provided \eqref{eq:phiGgrobner} holds. 

Let us show \eqref{eq:phiGgrobner}. We start by noting that for a subset $J$ in $\mathbb{R}(k)[x]$, the set $\LM(J)$ consists only of monomials in $x_1,\dots,x_n$, and thus is naturally included in $\mathbb{R}(\kappa)[y,x]$ as well. 
 Further
\begin{equation} \label{eq:LMJ}
\LM(J) = \LM(\Phi(J)). 
\end{equation}

Let $G'$  be a reduced Gr\"obner basis of $I$. Since $G'$ is reduced, pairs of monomials in $\LM(G')=\LM(\Phi(G'))$ are relatively prime. Since $\Phi(G')$ is a basis of $\langle \Phi(G')\rangle$, then by Lemma~\ref{lem:relatively-prime} and Equation~\eqref{eq:monomial},  it is actually a Gr\"obner basis and \eqref{eq:phiGgrobner} holds for $G'$. 
Now, consider an arbitrary Gr\"obner basis $G$ of $I$. 
In $\mathbb{R}(k)[x]$ it holds 
\begin{equation}\label{eq:LM}
\langle \LM( G ) \rangle = \langle \LM( G' ) \rangle. 
\end{equation}
This means that every monomial in $ \langle \LM( G' ) \rangle $ is divisible by a monomial in $\langle \LM( G ) \rangle$ and viceversa \cite[\S 2.4, Lemma 2]{Cox}. Since this fact holds also in $\mathbb{R}(\kappa)[y,x]$, \eqref{eq:LM} holds also   in $\mathbb{R}(\kappa)[y,x]$. Combined with \eqref{eq:LMJ} this gives
\[ \big\langle \LM( \Phi(G) ) \big\rangle =\big \langle \LM( \Phi(G') )\big\rangle. \]

By Lemma~\ref{lem:phi-properties}(ii), $\langle G \rangle=\langle G' \rangle$  in $\mathbb{R}(k)[x]$  implies  $\langle \Phi(G) \rangle=\langle \Phi(G')  \rangle$. Thus in $\mathbb{R}(\kappa)[y,x]$ we have
\begin{align*}
\big\langle \LM( \Phi(G) )\big \rangle =\big  \langle \LM( \Phi(G') )\big \rangle = \big\langle \LM(  \langle\Phi(G') \rangle  ) \big\rangle = \big\langle \LM(  \langle\Phi(G) \rangle  ) \big\rangle.
\end{align*}
This shows that \eqref{eq:phiGgrobner} holds.

		The second part of the lemma is clear from the definition of a reduced Gr\"{o}bner basis and using that $\Phi(G)\cup\{y_i-\Rem\big(\sum_{c\in\cC}\mu_{i,c}x^c,\Phi(G)\big)\}$ is also a Gr\"{o}bner basis.
	\end{proof}
	
	From the  computational point of view, Theorem~\ref{Proposition Groebner Intermediates} is very useful. 
Instead of computing a Gr\"{o}bner basis of $\widetilde{I}$ directly, one can first compute a Gr\"{o}bner basis $G$ for the core network, with a smaller number of variables and polynomials, then add the polynomials $y_i-\sum_{c\in\cC}\mu_{i,c}x^c$, and, finally,  simplify them using polynomial division by $\Phi(G)$. The second step involves only  linear algebra. 
A possible issue here is to verify that  the rational functions $\phi_{c\rightarrow c'}$ are algebraically independent. We provide in  Section~\ref{sec:algebraic-independent} a list of network structures involving intermediates for which the condition is fulfilled.

	\begin{figure}[t]
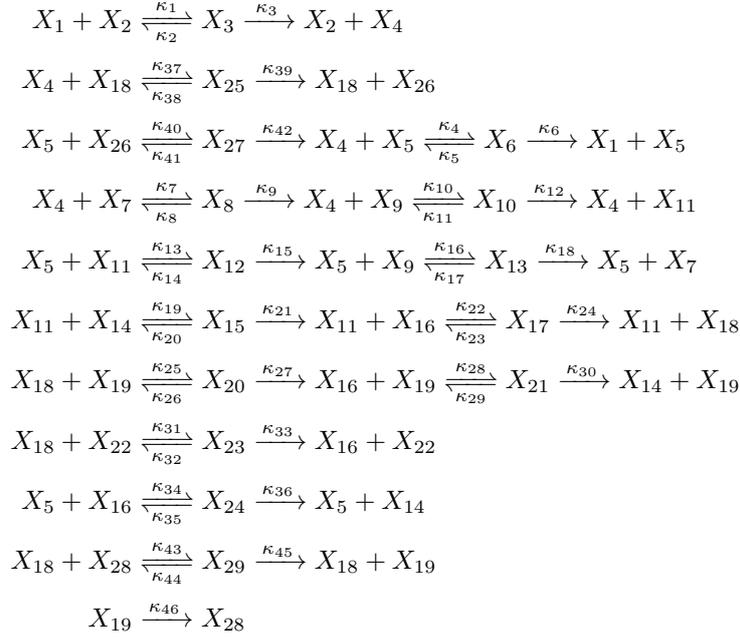

	{\small 				\begin{align*}
		X_1 + X_2  & \ce{<=>[\kappa_1][\kappa_2]} X_3 \ce{->[\kappa_3]} X_2 + X_4 \nonumber \\
X_4 + X_{18} & 		\ce{ <=>[\kappa_{37}][\kappa_{38}]} X_{25} \ce{->[\kappa_{39}]} X_{18} +X_{26}  \nonumber  \\
X_{5} + X_{26}&		\ce{  <=>[\kappa_{40}][\kappa_{41}]} X_{27} \ce{->[\kappa_{42}]} X_4 + X_5 \ce{<=>[\kappa_4][\kappa_5]} X_6 \ce{->[\kappa_6]} X_1 + X_5  \nonumber  \\
X_4 + X_7 & 		\ce{ <=>[\kappa_7][\kappa_8]} X_8 \ce{->[\kappa_9]} X_4 + X_9 \ce{<=>[\kappa_{10}][\kappa_{11}]} X_{10} \ce{->[\kappa_{12}]} X_{4} + X_{11}  \nonumber  \\
X_{5} + X_{11} & 		\ce{ <=>[\kappa_{13}][\kappa_{14}]} X_{12} \ce{->[\kappa_{15}]} X_5 + X_9 \ce{<=>[\kappa_{16}][\kappa_{17}]} X_{13} \ce{->[\kappa_{18}]} X_5 + X_7  \nonumber  \\
X_{11} + X_{14} &		\ce{ <=>[\kappa_{19}][\kappa_{20}]} X_{15} \ce{->[\kappa_{21}]} X_{11} + X_{16} \ce{<=>[\kappa_{22}][\kappa_{23}]} X_{17} \ce{->[\kappa_{24}]} X_{11} + X_{18} \label{Conradi Reaction Netwrok} \\
X_{18} + X_{19} &		\ce{<=>[\kappa_{25}][\kappa_{26}]} X_{20} \ce{->[\kappa_{27}]} X_{16} + X_{19} \ce{<=>[\kappa_{28}][\kappa_{29}]} X_{21} \ce{->[\kappa_{30}]} X_{14} + X_{19}  \nonumber  \\
X_{18} + X_{22} &		\ce{<=>[\kappa_{31}][\kappa_{32}]} X_{23} \ce{->[\kappa_{33}]} X_{16} + X_{22}  \nonumber  \\
X_{5} + X_{16} &		\ce{<=>[\kappa_{34}][\kappa_{35}]} X_{24} \ce{->[\kappa_{36}]} X_{5} + X_{14}  \nonumber  \\
X_{18} + X_{28} &		\ce{<=>[\kappa_{43}][\kappa_{44}]} X_{29} \ce{->[\kappa_{45}]} X_{18} + X_{19}   \nonumber  \\
X_{19} & 		\ce{->[\kappa_{46}]} X_{28}  \nonumber 
		\end{align*}}
	\caption{Reaction network of Example \ref{Example Conradi System 169}.} \label{Conradi Reaction Netwrok}
	\end{figure}

	\begin{example}\label{Example Conradi System 169}
An interesting example to show the advantage of using Theorem~\ref{Proposition Groebner Intermediates} is Example 4.4 of \cite{Conradi-Binomial}. We consider the reaction network $\widetilde{\cN}$ with associated digraph given in Figure~\ref{Conradi Reaction Netwrok}.

		This reaction network has 29 species and 46 reactions. Therefore the steady state ideal is generated by 29 polynomials in 29 variables and 46 parameters. Using Singular \cite{DGPS} and monomial order grevlex with $x_1>\dots>x_{29}$ (the same monomial order that is used in \cite{Conradi-Binomial}), it took between 110 and 115 seconds\footnote{Information about the processor: Intel(R) Core(TM) i5-3570 CPU @3.4GHz 3.4GHz with 8GB RAM. We report the interval of obtained times after several runs of Singular, computed in milliseconds.} to compute the reduced Gr\"{o}bner basis. This basis consists of 169 polynomials.
		
		Now we consider the monomial order introduced in Theorem~\ref{Proposition Groebner Intermediates} for the removal of the 15 intermediates: \[X_3,X_6,X_8,X_{10},X_{12},X_{13},X_{15},X_{17},X_{20},X_{21},X_{23},X_{24},X_{25},X_{27},X_{29}.\]
		The original network $\widetilde{\cN}$ is an extension of the following core network $\cN$ with 14 species and 16 reactions:
	{\small 	\begin{align*}
X_1 + X_2 &	\ce{->[k_1]} X_2 + X_4 & 
X_{5} + X_{26} &	\ce{ ->[k_3]} X_4 + X_5 \ce{->[k_4]} X_1 + X_5  \\
X_4 + X_{18}  &	\ce{->[k_2]} X_{18} + X_{26}   & 
  X_4 + X_7  &	\ce{->[k_5]} X_4 + X_9 \ce{->[k_6]} X_{4} + X_{11}\nonumber \\
  X_{18} + X_{22} &	\ce{ ->[k_{13}]} X_{16} + X_{22} & X_{5} + X_{11} &	\ce{ ->[k_7]} X_5 + X_9 \ce{->[k_8]} X_5 + X_7  \\ 
  X_{5} + X_{16}  &	\ce{->[k_{14}]} X_{5} + X_{14} &  
X_{11} + X_{14}  &	\ce{->[k_9]} X_{11} + X_{16} \ce{->[k_{10}]} X_{11} + X_{18} \\ 
X_{18} + X_{28} &	\ce{ ->[k_{15}]} X_{18} + X_{19} & X_{18} + X_{19}  &	\ce{->[k_{11}]} X_{16} + X_{19} \ce{->[k_{12}]} X_{14} + X_{19} 
 \nonumber \\
X_{19}  &	\ce{->[k_{16}]} X_{28}. \nonumber
		\end{align*}}
The functions $\phi_{c\rightarrow c'}$ are algebraically independent over $\R$ by Corollary \ref{Corollary Algebraic Independence singletons} in Section~\ref{sec:algebraic-independent}.
We consider grevlex with $x_1>\dots>x_{28}$ for the monomials corresponding to $\cN$.
The monomial order  in Theorem~\ref{Proposition Groebner Intermediates} 	
is then associated with the following matrix
		{\small \begin{equation*}
		\widetilde{Q}=\left(\begin{array}{c;{2pt/2pt}c}
		\begin{array}{ccc}
		1 & & 0\\
		& \ddots & \\
		0 & & 1 
		\end{array} & 0\\
		\hdashline[2pt/2pt]
		0 & \begin{array}{rrrrr}
		1 & 1 & \dots & 1 & 1\\
		0 & 0 & \dots & 0 & -1\\
		0 & 0 & \dots & -1 & 0\\
		\vdots & \vdots & & \vdots & \vdots\\
		0 & -1 & \dots & 0 & 0
		\end{array}
		\end{array}\right),
		\end{equation*}}
		and order of variables
		{\small \[\begin{array}{l}
		x_3>x_6>x_8>x_{10}>x_{12}>x_{13}>x_{15}>x_{17}>x_{20}>x_{21}>x_{23}>x_{24}\\
		\qquad >x_{25}>x_{27}>x_{29}>x_1>x_2>x_4>x_5>x_7>x_9>x_{11}>x_{14}>x_{16}\\
		\qquad > x_{18}>x_{19}>x_{22}>x_{26}>x_{28}.
		\end{array}\]}
The reduced Gr\"obner basis of $\widetilde{I}$ with this monomial order has 33 polynomials and it takes about 96 seconds to compute it directly with Singular. Alternatively the strategy outlined in 
  Theorem~\ref{Proposition Groebner Intermediates} can be applied. The steady state ideal $I$ of $\cN$ is generated by 11 polynomials in 14 variables and 16 parameters. 
Using Singular, the reduced Gr\"{o}bner basis of  $I$ has 18 polynomials and its computation takes less than a millisecond.  The computation time for  the polynomials $H_i(y,x)$ is neglectable, since they are found by solving 15 independent linear equations. Therefore the reduced Gr\"{o}bner basis of the ideal of the original system has 18+15=33 polynomials and can be computed in less than a millisecond.
 
 We conclude that in general, regarding computational time, the monomial order introduced in Theorem~\ref{Proposition Groebner Intermediates} is a good choice for networks with intermediates, and further,  by applying the strategy of Theorem~\ref{Proposition Groebner Intermediates} we reduce the computation time considerably, compared with direct computation of the reduced Gr\"obner basis.
	\end{example}
	
	\begin{remark}
 Theorem~\ref{Proposition Groebner Intermediates}  holds regardless the choice of method to compute a Gr\"{o}bner basis. Since the computation of the polynomials $H_i$ is simple linear algebra, even for the fastest available methods for the computation of Gr\"obner bases, decomposing the computation as in Theorem~\ref{Proposition Groebner Intermediates}  should be faster than direct computation of the basis of the steady state ideal of $\widetilde{\cN}$.
	\end{remark}

	\begin{remark}
	For polynomials with integer coefficients, it is usually faster to compute a Gr\"obner basis using the so-called  \emph{p-modular} approach, see e.g. \cite{p-modular-Winkler, p-modular-Noro-Yokoyama}. These methods first choose a so-called lucky prime and compute a Gr\"{o}bner basis of the ideal in $\overline{\mathbb{Z}}_p[x]$. Then the coefficients of this Gr\"{o}bner basis are lifted to a Gr\"{o}bner basis in $\mathbb{Q}[x]$.  For the sake of comparison, we also computed how long it takes to find a Gr\"obner basis using p-modular approaches on the extended network in  Example \ref{Example Conradi System 169} with grevlex and $x_1>\dots>x_{29}$. 
	Using the largest prime number in Singular, $p= 32003$, it takes 127 seconds to compute the Gr\"{o}bner basis over $\overline{\mathbb{Z}}_{32003}$. Since coefficients in the starting basis are $1$ or $-1$, one may think that $p=2$ is a lucky prime. It took 97 seconds to compute the Gr\"{o}bner basis over $\overline{\mathbb{Z}}_2$. These times are larger than the times reported in Example~\ref{Example Conradi System 169}
 (and these Gr\"obner bases still need to be lifted to 	$\mathbb{Q}(\kappa)[y,x]$).  
	\end{remark}

An important consequence of Theorem~\ref{Proposition Groebner Intermediates} concerns parameter-free model discrimination. In this setting one seeks elements of the steady state ideal $\widetilde{I}$ involving only the concentrations of species that are experimentally measurable. These elements are called \emph{invariants}. Each invariant implies that there is a set of monomials that lie on a hyperplane, and the hypothesis of coplanary is then tested using experimental data \cite{Wnt-Matroid,InvariantModelDiscrimination,ComplexLinearInvariantsGunawardeena,
GeometryOfMultisitePhosphorylationGunawardeena,DistributiveProcessiveMultisitePhosphorylationGunawardeena}. This approach is attractive because it does not require  knowing the values of the reaction rate constants.

Experimentally measurable species do not typically involve intermediates. In this case, Theorem~\ref{Proposition Groebner Intermediates} tells us that invariants on the non-intermediate species can be computed directly from the core network, using elimination ideals.

\begin{corollary}\label{Corollary Elimination Intermediates}
		Let $\cN$ be a reaction network and $\widetilde{\cN}$ an extension of it via the addition of $m$ intermediates $Y_1,\dots,Y_m$. Let $X_{i_1},\dots,X_{i_p}$ be non-intermediates. Then 
		\[\widetilde{I}\cap\mathbb{R}(\kappa)[x_{i_1},\dots,x_{i_p} ]= \Phi(I  \cap \mathbb{R}(k)[x_{i_1},\dots,x_{i_p}]).\]
	\end{corollary}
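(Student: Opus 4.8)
The plan is to deduce the statement from Theorem~\ref{Proposition Groebner Intermediates} together with the elimination property of elimination-type monomial orders recalled in Section~\ref{sec:groebner}. Since an elimination ideal such as $I\cap\R(k)[x_{i_1},\dots,x_{i_p}]$ does not depend on the chosen monomial order, I am free to pick a convenient one. First I would fix an elimination-type monomial order on $\R(k)[x]$, associated with a matrix $Q$, for the partition that keeps the variables $x_{i_1},\dots,x_{i_p}$ and eliminates the remaining $x_j$; concretely the order $\lex$ with $x_{i_1},\dots,x_{i_p}$ placed last works. Let $G$ be a Gr\"obner basis of $I$ for this order. Then the elimination property gives that $G\cap\R(k)[x_{i_1},\dots,x_{i_p}]$ is a Gr\"obner basis of $I\cap\R(k)[x_{i_1},\dots,x_{i_p}]$; in particular it generates this elimination ideal.

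Next I would transport everything to $\widetilde{\cN}$ via Theorem~\ref{Proposition Groebner Intermediates}: the set $\widetilde{G}=\Phi(G)\cup\{H_1,\dots,H_m\}$ is a Gr\"obner basis of $\widetilde{I}$ for the order given by the matrix $\widetilde{Q}$ of \eqref{equ2}. The key observation is that $\widetilde{Q}$ is again of elimination type, now for the partition of $\{y_1,\dots,y_m,x_1,\dots,x_n\}$ that keeps exactly $x_{i_1},\dots,x_{i_p}$ and eliminates all the $y_i$ together with the $x_j$ that $Q$ already eliminated. Indeed, the identity block forces every $y_i$ to exceed any monomial in the $x$-variables, while on monomials involving no $y_i$ the order $\widetilde{Q}$ restricts to $Q$, which is elimination-type for $x_{i_1},\dots,x_{i_p}$. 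Applying the elimination property once more, $\widetilde{G}\cap\R(\k)[x_{i_1},\dots,x_{i_p}]$ is a Gr\"obner basis of $\widetilde{I}\cap\R(\k)[x_{i_1},\dots,x_{i_p}]$.

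It then remains to identify $\widetilde{G}\cap\R(\k)[x_{i_1},\dots,x_{i_p}]$ explicitly. Each $H_i=y_i-\sum_{c\in\cC}\mu_{i,c}\,x^c$ contains the variable $y_i$, so no $H_i$ lies in $\R(\k)[x_{i_1},\dots,x_{i_p}]$. For the remaining generators I would use that $\Phi$ only rewrites the coefficients, replacing each $k_{c\rightarrow c'}$ by $\phi_{c\rightarrow c'}(\k)$, and leaves the $x$-monomials untouched; since the algebraic independence of the $\phi_{c\rightarrow c'}(\k)$ makes $\phi$ an injective map of fields, the support of $\Phi(g)$ in the $x$-variables equals that of $g$. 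Hence $\Phi(g)\in\R(\k)[x_{i_1},\dots,x_{i_p}]$ if and only if $g\in\R(k)[x_{i_1},\dots,x_{i_p}]$, giving
\[\widetilde{G}\cap\R(\k)[x_{i_1},\dots,x_{i_p}]=\Phi\big(G\cap\R(k)[x_{i_1},\dots,x_{i_p}]\big).\]
Combining the two elimination statements, both $\widetilde{I}\cap\R(\k)[x_{i_1},\dots,x_{i_p}]$ and $\Phi(I\cap\R(k)[x_{i_1},\dots,x_{i_p}])$ equal the ideal generated by this common set, the latter being well defined by Lemma~\ref{lem:phi-properties}(ii); this yields the claimed equality.

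The routine parts are the two applications of the elimination property. The step I expect to require the most care is verifying that $\widetilde{Q}$ is genuinely of elimination type for the enlarged variable set and that $\Phi$ preserves $x$-support, since the whole argument hinges on the generators of $\widetilde{I}$ landing in the correct subring exactly when their core counterparts do; everything else is bookkeeping.
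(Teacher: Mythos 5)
Your proof is correct and follows essentially the same route as the paper's: choose a lex (elimination-type) order with the kept variables $x_{i_1},\dots,x_{i_p}$ last, lift the Gr\"obner basis via Theorem~\ref{Proposition Groebner Intermediates}, apply the elimination property on both the core and extended sides, and identify $\widetilde{G}\cap\mathbb{R}(\kappa)[x_{i_1},\dots,x_{i_p}]=\Phi\big(G\cap\mathbb{R}(k)[x_{i_1},\dots,x_{i_p}]\big)$, concluding with Lemma~\ref{lem:phi-properties}(ii). The only difference is one of explicitness: you verify in detail that $\widetilde{Q}$ is of elimination type for the enlarged partition and that $\Phi$ preserves monomial supports, two points the paper's shorter proof leaves implicit.
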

	\begin{proof}
For simplicity, assume $\{i_1,\dots,i_p\}= \{n-p+1,\dots,n\}$ and let $\overline{x} = (x_{n-p+1},\dots,x_n)$.
		Consider the monomial order $\lex(y_1,\dots,y_m,x_1,\dots,x_n)$  on $\mathbb{R}(\kappa)[y,x]$, and $\lex(x_1,\dots,x_n)$ on $\mathbb{R}(k)[x]$. Let $G$ be a Gr\"{o}bner basis of $I$. By Theorem \ref{Proposition Groebner Intermediates}, $\widetilde{G}$ is a Gr\"{o}bner basis of $\widetilde{I}$. By the properties of lex and  Lemma~\ref{lem:phi-properties}(ii) we have
		\[\widetilde{I}\cap\mathbb{R}(\kappa)[\overline{x}] = \langle \widetilde{G} \cap\mathbb{R}(\kappa)[\overline{x}]  \rangle = \langle \Phi( G \cap  \mathbb{R}(k)[\overline{x}] ) \rangle =  \Phi( I  \cap \mathbb{R}(k)[\overline{x}]) .\]
This concludes the proof.
	\end{proof}
	
Note that the monomial order on $\R(\k)[y,x]$ given in Theorem~\ref{Proposition Groebner Intermediates} is of elimination type with respect to the partition $\{y_1,\dots,y_m\}\cup \{x_1,\dots,x_n\}$.

	\begin{example}\label{reduced MAPK}
	Consider the network in Example \ref{Example MAPK mu} and its core network in Example \ref{Example MAPK mu reduced}. In order to find invariants of the extended network involving the concentration of the non-intermediate species $E,X_0,X_1,X_2$, we consider the ideal $I\cap \R(k)[e,x_0,x_1,x_2]$, which is generated by the polynomial
	\[e\, ( k_1 k_3  x_0x_2 -k_2k_4x_1^2).\]
We have 
\begin{align*}
\phi(k_1,k_2,k_3,k_4)  & = \left(  \tfrac{\kappa_1\k_3}{\kappa_2+\kappa_3}  , \tfrac{\kappa_4\k_6}{\kappa_5+\kappa_6},   \tfrac{\kappa_7\k_9}{\kappa_8+\kappa_9}, \tfrac{\kappa_{12}\k_{14}}{\kappa_{13}+\kappa_{14}}\right).
\end{align*} 
The functions $\phi_{c\rightarrow c'}$ are algebraically independent over $\R$ by Corollary \ref{Corollary Algebraic Independence singletons}.
By Corollary \ref{Corollary Elimination Intermediates} the ideal $\widetilde{I}\cap \R(\k)[e,x_0,x_1,x_2]$ is
generated by the polynomial 
	\[e \left(  \tfrac{\kappa_1\k_3}{\kappa_2+\kappa_3}     \tfrac{\kappa_7\k_9}{\kappa_8+\kappa_9} x_0x_2   -   \tfrac{\kappa_4\k_6}{\kappa_5+\kappa_6} \tfrac{\kappa_{12}\k_{14}}{\kappa_{13}+\kappa_{14}} x_1^2\right).\]
		\end{example}
	
	\subsection{Detecting binomial steady state ideals} A \emph{binomial} is a polynomial having at most two terms. 
		An ideal is said to be binomial if it admits a set of generators consisting of binomials only.  By \cite[Corollary 1.2]{Eisenbud-Binomial}, an ideal is binomial if and only if any reduced Gr\"obner basis (with respect to any monomial order)  consists of binomials. 
		
	It is of biological relevance  in the study of reaction networks   to determine whether there exists a choice of reaction rate constants $k$ for which there are multiple positive steady states in some coset $x_0+S$  defined by the vector subspace $S$ that contains the image of $F_{k}$ (see Section~\ref{sec:steady_state_ideal}). This property is termed \emph{multistationarity}. 
If the steady state ideal is binomial, then there exist efficient ways to determine whether the network admits multistationarity  \cite{Dickenstein-Toric,Dickenstein-MESSI,Feliu-Sign}. This leads to the problem of determining whether an ideal is binomial, and in case it is, of finding a binomial basis of it. 
As noted, both questions can be addressed by finding a Gr\"obner basis of the steady state ideal of the network. Thus, for networks with intermediates, our results can be applied also to detect binomial steady state ideals.

Recall that we are assuming that the rational functions $\phi_{c\rightarrow c'}(\kappa)$  are \textbf{algebraically independent} over $\R$.

	\begin{thm}\label{Theorem Binomiality and Intermediates}
		Let $\cN$ be a reaction network and $\widetilde{\cN}$ an extension of it via the addition of $m$ intermediates $Y_1,\dots,Y_m$. 
		
		The steady state 	$\widetilde{I}$ is  binomial if and only if 
		\begin{itemize}
		\item 		$I$ is binomial, and,
		\item  for any reduced Gr\"obner basis $G$ of $I$ and for every $i=1,\dots,m$, the remainder of the division of  $\sum_{c\in \cC}\mu_{i,c}x^c$ by $\Phi(G)$ has at most one term.
		\end{itemize}
	\end{thm}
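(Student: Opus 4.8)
The plan is to reduce everything to the explicit reduced Gr\"obner basis description in Theorem~\ref{Proposition Groebner Intermediates}, combined with the characterization of binomial ideals recalled above: by Eisenbud--Sturmfels \cite[Corollary 1.2]{Eisenbud-Binomial}, an ideal is binomial if and only if its reduced Gr\"obner basis with respect to \emph{every} monomial order consists of binomials. So I would translate ``$\widetilde{I}$ binomial'' into a statement about the terms appearing in the reduced Gr\"obner basis of $\widetilde{I}$, and then read that basis off from Theorem~\ref{Proposition Groebner Intermediates}.

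First I would record the key structural fact about $\Phi$. Because the $\phi_{c\rightarrow c'}(\kappa)$ are algebraically independent over $\R$, the induced map $\phi\colon \R(k)\to\R(\kappa)$ is an injective field homomorphism, so $\phi(a)\neq 0$ whenever $a\neq 0$. Since $\Phi$ fixes the variables $x_1,\dots,x_n$ and transforms only the coefficients, through $\phi$, it leaves the support (the set of monomials with nonzero coefficient) of every $f\in\R(k)[x]$ unchanged. Consequently $\Phi$ preserves the number of terms: $f$ and $\Phi(f)$ have the same number of monomials. In particular, for a set $G\subseteq\R(k)[x]$, the set $G$ consists of binomials if and only if $\Phi(G)$ does.

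Next I would combine this with Theorem~\ref{Proposition Groebner Intermediates}. Fix any monomial order on $\R(k)[x]$ given by a matrix $Q$ and let $G$ be the corresponding reduced Gr\"obner basis of $I$; then the reduced Gr\"obner basis of $\widetilde{I}$ with respect to $\widetilde{Q}$ is $\Phi(G)\cup\{y_i-\Rem(\sum_{c\in\cC}\mu_{i,c}x^c,\Phi(G)) : i=1,\dots,m\}$. Each polynomial $y_i-\Rem(\sum_{c\in\cC}\mu_{i,c}x^c,\Phi(G))$ is a binomial precisely when its remainder part has at most one term. Hence this reduced Gr\"obner basis consists of binomials if and only if $\Phi(G)$ consists of binomials \emph{and} each remainder has at most one term; by the previous paragraph the first condition is equivalent to $G$, and therefore $I$, being binomial. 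For the forward implication, if $\widetilde{I}$ is binomial then by Eisenbud--Sturmfels its reduced Gr\"obner basis with respect to $\widetilde{Q}$ consists of binomials for every choice of $Q$, so the equivalence forces $I$ binomial and the remainder of $\sum_{c\in\cC}\mu_{i,c}x^c$ by $\Phi(G)$ to have at most one term for every $i$ and every reduced Gr\"obner basis $G$. Conversely, if $I$ is binomial and the remainder condition holds, fixing any $Q$ shows the reduced Gr\"obner basis of $\widetilde{I}$ consists of binomials, and Eisenbud--Sturmfels again gives that $\widetilde{I}$ is binomial.

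The only genuinely delicate point I expect is the support-preservation of $\Phi$, which is exactly where the algebraic independence hypothesis is used: without injectivity of $\phi$, some coefficient $\phi(a)$ could vanish and $\Phi$ could cancel or merge terms, breaking the equivalence ``$G$ binomial if and only if $\Phi(G)$ binomial'' and hence the whole argument. Everything else is direct bookkeeping on the reduced Gr\"obner basis furnished by Theorem~\ref{Proposition Groebner Intermediates}, together with the fact that the quantifier ``for any $G$'' is harmless here, since once $\widetilde{I}$ is known binomial the condition automatically holds for all reduced Gr\"obner bases $G$.
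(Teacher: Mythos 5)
Your proposal is correct and follows essentially the same route as the paper: both reduce the statement to the explicit reduced Gr\"obner basis $\Phi(G)\cup\{y_i-\Rem(\sum_{c\in\cC}\mu_{i,c}x^c,\Phi(G))\}$ from Theorem~\ref{Proposition Groebner Intermediates}, invoke the Eisenbud--Sturmfels characterization of binomiality via reduced Gr\"obner bases, and use algebraic independence of the $\phi_{c\rightarrow c'}$ to conclude that $\Phi(G)$ is binomial if and only if $G$ is. Your explicit justification of the last point (injectivity of $\phi$, hence support preservation by $\Phi$) is a detail the paper asserts without proof, but it is the same underlying idea.
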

	\begin{proof}
	Fix any monomial order on $\R(k)[x_1,\dots,x_n]$ associated with an $n\times n$ matrix $Q$ and consider the monomial order with matrix $\widetilde{Q}$  from Theorem~\ref{Proposition Groebner Intermediates}. 
Let $G$ be the reduced Gr\"obner basis of $I$ and
\[
\widetilde{G}' = \Phi(G)\cup \Big\{y_i-\Rem\big(\sum_{c\in\cC}\mu_{i,c}x^c,\Phi(G)\big)\Big\}
\]
	 the reduced Gr\"{o}bner basis of $\widetilde{I}$ (cf. Theorem~\ref{Proposition Groebner Intermediates}).
Using that an ideal is binomial if and only if any reduced Gr\"obner basis consists of binomials, the theorem is a consequence of the following two facts:
	\begin{itemize}
		\item By definition, $\widetilde{G}'$ consists of binomials if and only if $\Phi(G)$ is a set of binomials and the remainder of the division of  $\sum_{c\in \cC}\mu_{i,c}x^c$ by $\Phi(G)$ has at most one term. 
		\item By the algebraic independence of $\phi_{c\rightarrow c'}$, $\Phi(G)$ consists of binomials if and only if $G$ does. 
		\end{itemize}
	\end{proof}

Since the polynomial $\sum_{c\in \cC}\mu_{i,c}x^c$ has exactly one term  for 1-input intermediates, we readily obtain the following corollary.

	\begin{corollary}\label{Corollary on-input and binomiality}
	Let $\cN$ be a reaction network and $\widetilde{\cN}$ an extension of it via the addition of $m$ \emph{1-input} intermediates $Y_1,\dots,Y_m$.  Then $\widetilde{I}$ is binomial if and only if $I$ is binomial.
	\end{corollary}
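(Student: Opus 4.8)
The plan is to deduce this as a direct consequence of Theorem~\ref{Theorem Binomiality and Intermediates}, by showing that for $1$-input intermediates the second bulleted condition there is automatically met once $I$ is known to be binomial. The forward implication is immediate: if $\widetilde{I}$ is binomial, then Theorem~\ref{Theorem Binomiality and Intermediates} forces $I$ to be binomial as well, since being binomial is one of the two stated necessary conditions.

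For the converse, I would assume $I$ is binomial and fix a reduced Gr\"obner basis $G$ of $I$. By the algebraic independence of the $\phi_{c\rightarrow c'}(\kappa)$, the set $\Phi(G)$ consists of binomials (this is exactly the second fact used in the proof of Theorem~\ref{Theorem Binomiality and Intermediates}). Next I would invoke Theorem~\ref{Theorem Systems of Intermediates}(i): since each $Y_i$ is a $1$-input intermediate, there is precisely one non-intermediate complex $c$ that is an input for $Y_i$, so exactly one coefficient $\mu_{i,c}$ is nonzero. Hence $\sum_{c\in\cC}\mu_{i,c}x^c$ is a single monomial for every $i$, as already noted just before the statement.

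The only substantive remaining step, and the point I expect to be the main obstacle, is the following elementary but crucial observation about polynomial division: dividing a single monomial by a set of binomials always yields a remainder with at most one term. I would establish this by tracking the running dividend $p$ in the division algorithm. Starting from a monomial, each reduction step replaces $p$ by $p-(\LT(p)/\LT(g))\,g$ for some binomial $g\in\Phi(G)$; because $g$ has exactly two terms and its leading term cancels $\LT(p)=p$, the new dividend equals $-(p/\LT(g))$ times the trailing term of $g$, which is again a single monomial. The only step that feeds the remainder occurs when $p$ is divisible by no leading term, and there the entire monomial $p$ is moved to the remainder while the dividend becomes zero, so the algorithm halts. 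Thus the remainder is either zero or a single monomial, regardless of the ordering of the divisors. Applying this with $p=\sum_{c\in\cC}\mu_{i,c}x^c$ shows that the second condition of Theorem~\ref{Theorem Binomiality and Intermediates} holds for every $i$, and therefore $\widetilde{I}$ is binomial.
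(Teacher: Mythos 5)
Your proposal is correct and takes essentially the same approach as the paper: the paper deduces the corollary from Theorem~\ref{Theorem Binomiality and Intermediates} by observing that $\sum_{c\in\cC}\mu_{i,c}x^c$ is a single monomial for 1-input intermediates, and leaves the rest as immediate. Your explicit check that dividing a monomial by a set of binomials (which $\Phi(G)$ is, since $I$ binomial makes the reduced Gr\"obner basis $G$ binomial) always leaves a remainder with at most one term is exactly the detail hidden behind the paper's ``readily obtain.''
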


Since  1-input intermediates are the most abundant form of intermediates found in realistic networks, this corollary implies that in order to check whether a steady state ideal is binomial, we can often remove intermediates and check whether the steady state ideal of the core network is binomial.

	\begin{example}\label{Example Binomiality of Example 3.8}
Consider the network in Example \ref{Example Finding Mu} and its core network given in Example \ref{Example Realization Condition}. The functions $\phi_{c\rightarrow c'}$ are algebraically independent over $\R$ by Example \ref{Example Algebraic Independence}. 
Since the steady state ideal of $\cN$ is
\[ \langle -(k_1-k_2)x_1x_2-2k_3x_1^2, (k_1-k_2)x_1x_2+2k_3x_1^2 \rangle, \]  the core network has a binomial steady state ideal. The reduced Gr\"obner basis for this ideal with monomial order $\lex(x_1,x_2,x_3)$ is 
\[G=\left\{x_1^2-\tfrac{(k_1-k_2)}{2k_3}x_1x_2\right\}.\] 
We apply Theorem \ref{Theorem Binomiality and Intermediates} to conclude that the steady state ideal of the extended network is also binomial. 
The intermediates $Y_1,Y_2$ are 1-input intermediates and hence the remainder condition of the theorem is automatically fulfilled. For the intermediate $Y_3$, $\Rem\big(\mu_{3,\scriptscriptstyle X_1+X_2}x_1x_2+\mu_{3,\scriptscriptstyle X_2+X_3}x_2x_3,\Phi(G)\big)$ has a single term with monomial $x_1x_2$. Therefore we conclude that the extended network also has a binomial steady state ideal.
	\end{example}

			The following example shows that extended networks with multi-input intermediates might not have binomial steady state ideals, even though their core networks have.

		\begin{example}\label{Example MAPK Trinomiality}
		Consider the network given in Example  \ref{Example MAPK mu}  and its core network given in Example \ref{Example MAPK mu reduced}.		The steady state ideal of the core network is binomial with basis 
$ B=\{k_1x_0e-k_4x_1f,k_2x_1e-k_3x_2f\}.$
The intermediates $Y_4$ and $Y_6$ are 2-input intermediates. 		
		 The remainder of the division of $\mu_{4,\scriptscriptstyle X_2+E}x_2f+\mu_{4,\scriptscriptstyle X_1+F}x_1f$ by $\Phi(G)$ for $G$ the reduced Gr\"{o}bner basis of $I$  with the monomial order $\lex(x_2,x_1,x_0,f,e)$ is
		 \[\tfrac{\k_{11}}{\kappa_{10}}x_1f+\tfrac{\kappa_7\kappa_9}{\kappa_8\kappa_{10}+\kappa_9\kappa_{10}}x_2f,\] which has two terms. Therefore by  Theorem \ref{Theorem Binomiality and Intermediates} the steady state ideal of the network in Example  \ref{Example MAPK mu} is not binomial.
	\end{example}
 
\begin{remark}
In \cite{Conradi-Binomial}, a method for determining whether a \emph{homogeneous} ideal is binomial is introduced. 	 The method avoids the computation of Gr\"obner bases and is regarded as a fast method. 
If the steady state ideal of the core network is homogeneous, then Theorem~\ref{Theorem Binomiality and Intermediates} or Corollary~\ref{Corollary on-input and binomiality} in combination with this method provide a fast procedure to detect binomial steady state ideals.

 Interestingly,
steady state polynomials of core networks are often homogeneous of degree two, since  it is common that non-intermediate species appear in complexes of the form $X_i+X_j$, yielding quadratic terms in the steady state polynomials. This is for example the case for so-called Post-Translational Modification Networks \cite{TG-rational}.   
\end{remark}

	\section{Algebraic independence} \label{sec:algebraic-independent}

In this section we discuss how to check whether the functions $\phi_{c\rightarrow c'}$ are algebraically independent over $\R$ and provide classes of intermediates for which this property holds.
Consider a set of rational functions $A=\big\{\frac{f_1}{g_1},\dots,\frac{f_m}{g_m}\big\}\subseteq \R(x_1,\dots,x_n)$.
By \S III.7,  Theorem III, in  \cite{AlgebraicGeometryMethods}, the set $A$ is algebraically independent over $\R$ if and only if the rank of the associated Jacobian matrix $\begin{pmatrix}
\tfrac{\partial (f_i/g_i)}{\partial x_j}
\end{pmatrix}_{i,j}$ over $\R(x)$  is  $m$.

Another way to check algebraic independence that requires the computation of a Gr\"obner basis is as follows. Let $\varphi$ be the function on $\R^n$ minus the zero locus of the product $g_1\cdots g_m$ defined by
\[x= (x_1,\dots,x_n) \mapsto \left(\frac{f_1(x)}{g_1(x)},\dots,\frac{f_m(x)}{g_m(x)} \right).\]
By \S 3.3,  Theorem 2, in \cite{CoxElementary}, the closure of $\im(\varphi)$ is the variety associated with the ideal
\[ J:=\big\langle g_1 T_1 - f_1,\dots, g_m T_m - f_m, 1- y g_1\cdots g_m\big\rangle \cap \R[T_1,\dots,T_m]. \]
Since the sets of polynomials vanishing on a set and on its closure agree (see \cite{CoxElementary} after Definition 2 in \S 4.4),  $A$ is algebraically independent over $\R$ if and only if $J=\{0\}$.

\begin{example}\label{Example Algebraic Independence}  
	The functions $\phi_{c\rightarrow c'}$ of Examples \ref{Example Finding Mu} and \ref{Example Realization Condition} are
	\begin{align*}
	\phi_{\scriptscriptstyle X_1+X_2\rightarrow 2X_2}(\kappa) &= \kappa_4\, \mu_{2,\scriptscriptstyle X_1+X_2}+\kappa_8\, \mu_{3,\scriptscriptstyle X_1+X_2} =  
	 \tfrac{\kappa_1\kappa_3}{\kappa_2+\kappa_3+\kappa_5} + \tfrac{\kappa_1\kappa_5\kappa_8}{(\kappa_6+\kappa_8)(\kappa_2+\kappa_3+\kappa_5)} ,\\
	\phi_{\scriptscriptstyle X_1+X_2\rightarrow 2X_1}(\kappa) &= \kappa_6\, \mu_{3,\scriptscriptstyle X_1+X_2} =  \tfrac{\kappa_1\kappa_5\kappa_6}{(\kappa_6+\kappa_8)(\kappa_2+\kappa_3+\kappa_5)},\\
	\phi_{\scriptscriptstyle 2X_1\rightarrow 2X_2}(\kappa) &= \kappa_{9}+\kappa_8\, \mu_{3,\scriptscriptstyle 2X_1}= \kappa_{9}+\tfrac{\kappa_7\kappa_8}{\kappa_6+\kappa_8}.
	\end{align*}
	We find that  $J=\{0\}$. Hence the algebraic independence condition holds for the network in Example \ref{Example Realization Condition}.
	Alternatively, one easily checks that the associated Jacobian matrix has rank 3.
\end{example}

\medskip
The computations above can be simplified by taking into account what parameters occur in each of the rational functions. 

\begin{definition}\label{Definition Algebraic Independence Components}
	Let $\widetilde{\cN}$ be an extension of $\cN$ via the addition of the intermediates $\{Y_1,\dots,Y_m\}$. 
	Consider the digraph associated with  $\widetilde{\cN}$. Let $\mathcal{Y}_1,\dots,\mathcal{Y}_{t'}$ denote the vertex sets of the connected components of the subgraph induced by the subset of vertices  $\{Y_1,\dots,Y_m\}$.
	
	Let $\cR'\subseteq \cR$ be the subset of reactions of the core network that are not in $\widetilde{\cR}$. These reactions arise necessarily from paths through intermediates.
	We say that two reactions $r_1\colon c_1\rightarrow c_1', r_2\colon c_2\rightarrow c_2'\in \cR'$ \emph{overlap} if there 
	exist 	paths through intermediates 
	\[ c_1\rightarrow Y_{i_1}\rightarrow \dots \rightarrow Y_{i_p} \rightarrow  c_1', \qquad  c_2\rightarrow Y_{j_1} \rightarrow \dots \rightarrow Y_{j_q} \rightarrow  c_2' \]
	with all intermediates belonging to the same set $\mathcal{Y}_i$.
	
	Consider the equivalence relation on $\cR'$ generated by the overlap relation: $r\sim r'$ if and only if there exist $r_0=r,r_1,\dots,r_{p}=r'$ such that 
	$r_i,r_{i+1}$ overlap for all $i=0,\dots,p-1$.
	Let  $\cR'_1,\dots,\cR'_t$ be the equivalence classes of this equivalence relation.
 
\end{definition}

\begin{example}\label{Example Algebraic Independence Graphs1}
	Consider the network in Example \ref{Example Realization Condition}. The set $\cR'$  consists of two reactions
	$X_1+X_2\rightarrow 2X_2$ and $X_1+X_2\rightarrow 2X_1$. 
	The subgraph of the digraph associated with $\widetilde{\cN}$ induced by the set of intermediates is connected. Thus the two reactions of $\cR'$ are equivalent and there is one equivalence class.
\end{example}

\begin{lemma}\label{Lemma Algebraic Independece Components}
	The set $\{\phi_{c\rightarrow c'}(\kappa)\mid c\rightarrow c'\in\cR\}$ is algebraically independent over $\R$ if and only if the set $\{\phi_{c\rightarrow c'}(\kappa)\mid c\rightarrow c'\in\cR_i'\}$ is algebraically independent over $\R$ for all $i=1,\dots,t$.
\end{lemma}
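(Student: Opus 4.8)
The plan is to apply the Jacobian criterion for algebraic independence recalled at the beginning of this section, and to exhibit a block structure of the associated Jacobian that isolates the contribution of each equivalence class. Write $\cR\setminus\cR'$ for the reactions of the core network that already occur in $\widetilde{\cR}$, so that $\cR=(\cR\setminus\cR')\sqcup\cR'_1\sqcup\dots\sqcup\cR'_t$. Throughout, recall from \eqref{eq:phi} that $\phi_{c\rightarrow c'}(\kappa)=\kappa_{c\rightarrow c'}+\sum_{i=1}^m\kappa_{Y_i\rightarrow c'}\mu_{i,c}$, and that $\mu_{i,c}\neq 0$ only when $c$ is an input for $Y_i$ (Theorem~\ref{Theorem Systems of Intermediates}(i)). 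I would organize the argument in two reductions: first peeling off the reactions in $\cR\setminus\cR'$, and then block-diagonalizing the remaining Jacobian according to the classes $\cR'_i$.

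\emph{First reduction.} For a reaction $c\rightarrow c'\in\cR\setminus\cR'$ the complexes $c,c'$ are non-intermediate, so the rate constant $\kappa_{c\rightarrow c'}$ is not of the form $\kappa_{\,\cdot\,\rightarrow Y_j}$, $\kappa_{Y_j\rightarrow\,\cdot\,}$ or $\kappa_{Y_j\rightarrow Y_{j'}}$, and hence does not occur in any coefficient $\mu_{i,c}$; it therefore appears, linearly and with coefficient $1$, only in $\phi_{c\rightarrow c'}$. Ordering the functions so that those indexed by $\cR\setminus\cR'$ come first, and the parameters so that the private constants $\{\kappa_{c\rightarrow c'}\mid c\rightarrow c'\in\cR\setminus\cR'\}$ come first, the Jacobian of $\{\phi_r\mid r\in\cR\}$ takes the form $\left(\begin{smallmatrix}\mathrm{Id} & *\\ 0 & J'\end{smallmatrix}\right)$, where $J'$ is the Jacobian of $\{\phi_s\mid s\in\cR'\}$ with respect to the remaining (intermediate) parameters. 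Consequently the full set is algebraically independent if and only if $J'$ has full rank, i.e. if and only if $\{\phi_s\mid s\in\cR'\}$ is algebraically independent.

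\emph{Second reduction.} The key structural input is that the linear system defining the intermediates decouples along the connected components $\mathcal{Y}_1,\dots,\mathcal{Y}_{t'}$: two distinct intermediates are coupled only through a constant $\kappa_{Y_j\rightarrow Y_{j'}}$, which is nonzero only when they lie in the same component. Hence, after grouping intermediates by component, the coefficient matrix of the system is block diagonal, and the solution coefficient $\mu_{i,c}$ depends only on the parameters attached to the component $\mathcal{Y}_\ell$ containing $Y_i$, namely $P_\ell=\{\kappa_{c\rightarrow Y_j},\kappa_{Y_j\rightarrow c''},\kappa_{Y_j\rightarrow Y_{j'}}\mid Y_j,Y_{j'}\in\mathcal{Y}_\ell\}$, and these sets are pairwise disjoint. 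It follows that $\phi_{c\rightarrow c'}$, for $c\rightarrow c'\in\cR'$, involves only parameters in $\bigcup_{\mathcal{Y}_\ell\in C(c\rightarrow c')}P_\ell$, where $C(c\rightarrow c')$ is the set of components met by a path through intermediates realizing $c\rightarrow c'$. By the very definition of the overlap relation, two reactions of $\cR'$ sharing such a component are equivalent; therefore each component is met only by reactions lying in a single class $\cR'_i$, and reactions from different classes involve disjoint parameters (parameters of components met by no reaction of $\cR'$ give identically zero columns and may be discarded). Ordering the $\cR'$-functions and the intermediate parameters by class makes $J'$ block diagonal, so $\operatorname{rank}(J')=\sum_{i=1}^t\operatorname{rank}(J'_i)$ with $J'_i$ the Jacobian of $\{\phi_s\mid s\in\cR'_i\}$. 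As the number of rows also splits as $\sum_i|\cR'_i|$, the matrix $J'$ has full rank if and only if every block $J'_i$ does, i.e. if and only if each $\{\phi_s\mid s\in\cR'_i\}$ is algebraically independent. Combining the two reductions yields the claim.

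The main obstacle I expect is the precise verification of the structural fact underlying the second reduction: that $\mu_{i,c}$ genuinely depends only on the parameters $P_\ell$ of the component of $Y_i$, and that the overlap equivalence on $\cR'$ matches exactly the partition of parameters into the $P_\ell$. This requires unwinding the Matrix-Tree description of $\mu_{i,c}$ (equivalently, the block-diagonal structure of the intermediate linear system) and checking that a parameter is shared between two functions $\phi_{r_1},\phi_{r_2}$ only when $r_1$ and $r_2$ overlap; the bookkeeping of which constants enter which $\mu_{i,c}$ is the delicate point, while the Jacobian rank-additivity arguments are then routine.
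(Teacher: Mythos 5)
Your proof is correct, and its combinatorial skeleton coincides with the paper's: both decompose $\cR$ into $\cR\setminus\cR'$, whose constants $\kappa_{c\rightarrow c'}$ are private to their own $\phi_{c\rightarrow c'}$, and the classes $\cR'_1,\dots,\cR'_t$, whose functions involve pairwise disjoint sets of parameters. The difference lies in the device used to convert these disjointness facts into algebraic independence. The paper argues directly at the level of transcendence: each $T_i=\{\phi_s\mid s\in\cR'_i\}$ is algebraically independent, the $T_i$ involve disjoint variables, hence their union is independent, and the functions indexed by $\cR\setminus\cR'$ can be adjoined because each contains a parameter occurring nowhere else. You instead funnel everything through the Jacobian criterion recalled at the start of Section~\ref{sec:algebraic-independent}: privacy of the constants $\kappa_{c\rightarrow c'}$ yields an identity block that makes the full Jacobian block triangular, disjointness of parameters across classes makes $J'$ block diagonal, and rank additivity finishes the argument (with the convenient side effect that both directions of the equivalence come out at once). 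Both routes are sound; the paper's is much shorter, while yours has the merit of substantiating what the paper dispatches with the phrase ``by construction'': your observation that the linear system defining the intermediates decouples along the connected components $\mathcal{Y}_1,\dots,\mathcal{Y}_{t'}$, so that $\mu_{i,c}$ involves only the parameters attached to the component of $Y_i$, is precisely the justification the paper leaves implicit, and your handling of components met only by reactions of $\cR\setminus\cR'$ (identically zero columns of $J'$) is also correct.
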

\begin{proof}
Since $\cR_i'\subseteq \cR$ for all $i=1,\dots,t$, the forward implication is clear.
	
	To prove the reverse implication, assume that the sets $T_i=\{\phi_{c\rightarrow c'}(\kappa)\mid c\rightarrow c'\in\cR_i'\}$ are algebraically independent over $\R$  for all $i=1,\dots,t$.  By construction, the sets of parameters appearing in the rational functions $\phi_{c\rightarrow c'}(\k)$ are disjoint for two reactions in different  equivalence classes. Therefore the union of the sets $T_1,\dots,T_t$ is algebraically independent over  $\R$. 
 Furthermore if $c\rightarrow c'\in \cR \setminus \cR'$, then the parameter $\kappa_{c\rightarrow c'}$ appears only  in $\phi_{c\rightarrow c'}(\kappa)$. As a consequence  the set 
  \[ \bigcup_{i=1}^t T_i \cup  \{\phi_{c\rightarrow c'}(\kappa)\mid c\rightarrow c'\in\cR\setminus \cR'\} = \{\phi_{c\rightarrow c'}(\kappa)\mid c\rightarrow c'\in\cR\} \]
 is algebraically independent over $\R$. 
\end{proof}

\begin{example}\label{Example Algebraic Independence Graphs}
	Consider the network in Example \ref{Example Algebraic Independence Graphs1}. The algebraic independence of the functions $\phi_{c\rightarrow c'}(\k)$ for all reactions $c\rightarrow c'$ in $\cR$ 
	follows in this case from the algebraic independence of the functions $\phi_{c\rightarrow c'}(\k)$ for the reactions
		$X_1+X_2\rightarrow 2X_2$ and $X_1+X_2\rightarrow 2X_1$.
\end{example}

\begin{corollary}\label{Corollary Algebraic Independence singletons}
If   $\cR'=\emptyset$ or each of the equivalence classes $\cR'_1,\dots,\cR'_t$ consist of one reaction, then the rational functions $\phi_{c\rightarrow c'}(\k)$ are algebraically independent over $\R$.
\end{corollary}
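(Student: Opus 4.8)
The plan is to reduce to Lemma~\ref{Lemma Algebraic Independece Components} and then analyse a single reaction at a time. If $\cR'=\emptyset$, then there are no equivalence classes, the hypothesis of Lemma~\ref{Lemma Algebraic Independece Components} is vacuously met, and the conclusion follows at once (equivalently, every reaction then lies in $\cR\setminus\cR'$ and carries its own parameter $\k_{c\rightarrow c'}$, occurring in no other $\phi$, which forces algebraic independence). So I would assume instead that each class $\cR'_1,\dots,\cR'_t$ is a single reaction. By Lemma~\ref{Lemma Algebraic Independece Components} it then suffices to prove that for each $i$ the one-element set $T_i=\{\phi_{c\rightarrow c'}(\k)\}$, where $\cR'_i=\{c\rightarrow c'\}$, is algebraically independent over $\R$.

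The next step is to recast algebraic independence of a singleton as non-constancy. A single element $f\in\R(\k)$ is algebraically independent over $\R$ precisely when it is transcendental over $\R$; and since $\R(\k)$ is a purely transcendental extension of $\R$, the elements of $\R(\k)$ algebraic over $\R$ are exactly the constants. Hence $T_i$ is algebraically independent if and only if $\phi_{c\rightarrow c'}(\k)$ is non-constant, and the corollary reduces to showing that $\phi_{c\rightarrow c'}(\k)$ is non-constant for every $c\rightarrow c'\in\cR'$.

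I expect this last point to be the main obstacle, and I would settle it by locating a parameter on which $\phi_{c\rightarrow c'}$ genuinely depends. Since $c\rightarrow c'\in\cR'$ is not a reaction of $\widetilde{\cN}$, \eqref{eq:phi} gives $\phi_{c\rightarrow c'}(\k)=\sum_{i=1}^m\k_{Y_i\rightarrow c'}\,\mu_{i,c}$, a nonzero rational function. Hence there is an index $i$ with $\k_{Y_i\rightarrow c'}\neq 0$ and $\mu_{i,c}\neq 0$; by Theorem~\ref{Theorem Systems of Intermediates}(i) the latter means that $c$ is an input for $Y_i$, so there is a directed path $c\rightarrow Y_j\rightarrow\cdots\rightarrow Y_i\rightarrow c'$ through intermediates. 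Its first edge $c\rightarrow Y_j$ is a reaction of $\widetilde{\cN}$ with positive \emph{entry} rate $\k_{c\rightarrow Y_j}$, and I would show that this rate survives in $\phi_{c\rightarrow c'}$. Writing the $\mu_{i,c}$ as the coefficients of $x^c$ in the solution $y=A^{-1}b$ of the intermediate steady state system, the matrix $A$ collects only the transfer rates between intermediates and their total outflow rates, while each entry rate $\k_{c\rightarrow Y_\ell}$ occurs only in the right-hand side $b$, as the coefficient of $x^c$ in the equation of $Y_\ell$. Consequently $\k_{c\rightarrow Y_j}$ appears in $\phi_{c\rightarrow c'}$ only linearly, with coefficient free of $\k_{c\rightarrow Y_j}$.

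It then remains to check that this coefficient is nonzero, which is the delicate, genuinely combinatorial part I would write out most carefully. Using that all $\mu_{i,c}$ and all rates have nonnegative coefficients, the contributions of the various input paths $c\rightarrow Y_j\rightarrow\cdots\rightarrow Y_i\rightarrow c'$ cannot cancel; the existence of at least one such path (guaranteed by $\phi_{c\rightarrow c'}\neq 0$) then makes the coefficient of $\k_{c\rightarrow Y_j}$ strictly positive. Hence $\phi_{c\rightarrow c'}$ depends nontrivially on $\k_{c\rightarrow Y_j}$ and is non-constant, completing the proof. This positivity/non-cancellation step is exactly what rules out the pathology that the exit rates sitting in the denominators of the $\mu_{i,c}$ conspire to cancel the entry-rate dependence and leave $\phi_{c\rightarrow c'}$ constant, as already happens harmlessly in the elementary mechanism $X\rightarrow Y\rightarrow X'$, where $\phi_{X\rightarrow X'}=\k_{X\rightarrow Y}$.
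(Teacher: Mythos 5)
Your proposal is correct, and it follows the only route the paper itself suggests: the paper states this corollary with no proof at all, as an immediate consequence of Lemma~\ref{Lemma Algebraic Independece Components}, so the entire mathematical content of your write-up is precisely the step the paper treats as obvious, namely that a singleton $\{\phi_{c\rightarrow c'}(\k)\}$ with $c\rightarrow c'\in\cR'$ is algebraically independent over $\R$. You are right to insist that this needs an argument: algebraic independence of a singleton means $\phi_{c\rightarrow c'}\notin\R$, and the paper's observation that $\phi_{c\rightarrow c'}\neq 0$ with positive coefficients does not by itself rule out a constant (your closing example $X\rightarrow Y\rightarrow X'$, where the exit rate cancels, shows the kind of collapse one must worry about). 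Your reduction of independence to non-constancy via relative algebraic closedness of $\R$ in $\R(\k)$ is correct, as is your treatment of the case $\cR'=\emptyset$. The one place where your argument is only a sketch is the final claim that the coefficient of the entry rate $\k_{c\rightarrow Y_j}$ in $\phi_{c\rightarrow c'}$ is nonzero: positivity indeed prevents cancellation, but one still must show that this coefficient has at least one nonzero term, which amounts to showing that the entry $(A^{-1})_{ij}$ of your linear system is nonzero. This can be done with the Matrix--Tree description \eqref{eq:mu} — exhibit a spanning tree of $\widehat{G}_c$ rooted at $Y_i$ containing the edge $\star\rightarrow Y_j$, which exists because the input path from $Y_j$ to $Y_i$ runs through intermediates only and hence avoids $\star$ — or by inverse-positivity of M-matrices. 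Alternatively, \eqref{eq:mu} gives a shorter proof of non-constancy that bypasses this combinatorics entirely: every spanning tree of $\widehat{G}_c$ has exactly $m$ edges and every edge label is linear in $\k$, so each $\mu_{i,c}$ is homogeneous of degree zero in $\k$; hence by \eqref{eq:phi} each $\phi_{c\rightarrow c'}$ satisfies $\phi_{c\rightarrow c'}(\lambda\k)=\lambda\,\phi_{c\rightarrow c'}(\k)$, and a nonzero rational function homogeneous of degree one cannot lie in $\R$. Either completion yields the corollary; your outline buys a more concrete, network-theoretic explanation of \emph{which} parameter the function $\phi_{c\rightarrow c'}$ depends on, at the cost of the extra spanning-tree (or M-matrix) lemma.
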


For the networks in Example \ref{Example MAPK mu} and Example \ref{Example Conradi System 169}, each of the equivalence classes consist of one reaction. Therefore, by Corollary \ref{Corollary Algebraic Independence singletons}, the algebraic independence condition holds.

We  next show that the algebraic independence condition holds for specific classes of intermediates without the need of doing any extra computation. 

\begin{lemma}\label{Lemma Algebraic Independence more classes}
For the following extension networks, with intermediates $Y_1,\dots,Y_m$, the set $\{\phi_{c\rightarrow c'}(\kappa) \mid c\rightarrow c'\in \cR\}$ is algebraically independent over $\R$. 

	\begin{enumerate}[(i)]
		\item $c\ce{<->}Y_1 \ce{<->} Y_2\ce{<->} \dots \ce{<->} Y_m \ce{<->} c'$, provided $\{Y_1,\dots,Y_m\}$ is  a set of intermediates and where $\ce{<->}$ means the reaction can be irreversible or reversible.

		\item
		\[
		\xymatrix @C=0.5pc @R=1pc{
			& & & & & & & c_1\\
			& & & & & & & c_2\\
			& c_0 & Y_1 & Y_2 & \dots & Y_m\ar[uurr]^{\ell_1}\ar[urr]_{\ell_2}\ar[drr]_{\ell_p} & & \vdots\\
			& & & & & & & c_p
			\save "2,1"."4,7"*\frm{e}
			\restore
		}\]
			with an arbitrary digraph structure among the complexes $c_0,Y_1,\dots,Y_m$ such that there exists a directed path from $c_0$ to $Y_m$. 
			
		\item
		\[
		\xymatrix@C=1.5pc @R=1pc{
			c_0\ar@<+1pt>@{-^>}[r]^{\kappa_1}\ar@<-1pt>@{_<-}[r]_{\kappa_2} & Y_1\ar@<+1pt>@{-^>}[r]^{\kappa_3}\ar@<-1pt>@{_<-}[r]_{\kappa_4}\ar@/_1ex/[d]^{\ell_{1,1}}\ar@/_4ex/[ddd]_{\ell_{1,t_1}} & Y_2\ar@<+1pt>@{-^>}[r]^{\kappa_5}\ar@<-1pt>@{_<-}[r]_{\kappa_6}\ar@/_1ex/[d]^{\ell_{2,1}}\ar@/_4ex/[ddd]^{\ell_{2,t_2}} & \dots\ar@<+1pt>@{-^>}[r]^{\kappa_{2m-1}}\ar@<-1pt>@{_<-}[r]_{\kappa_{2m}} & Y_m \ar@/_1ex/[d]^{\ell_{m,1}}\ar@/_4ex/[ddd]_{\ell_{m,t_m}} \\
			& c_{1,1} & c_{2,1} & & c_{m,1} \\
			& \vdots & \vdots & & \vdots \\
			& c_{1,t_1} & c_{2,t_2} & & c_{m,t_m}
		}\]
		where some of the reactions with label $\kappa_{2i}$ might not exist, and for each $1\leq i\leq m$, either $t_i\geq 0$.
	\end{enumerate}
\end{lemma}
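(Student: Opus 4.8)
The plan is to verify the algebraic independence of the functions $\phi_{c\rightarrow c'}(\kappa)$ separately for each of the three families (i), (ii), (iii), reducing in each case to a manageable computation via Lemma~\ref{Lemma Algebraic Independece Components} and Corollary~\ref{Corollary Algebraic Independence singletons}. The general strategy is: first identify the set $\cR'$ of core reactions arising from paths through intermediates, then partition $\cR'$ into overlap equivalence classes $\cR'_1,\dots,\cR'_t$, and finally check algebraic independence within each class. The main tool for each class will be the Jacobian criterion from \S III.7 of \cite{AlgebraicGeometryMethods}: a set of $r$ rational functions is algebraically independent over $\R$ precisely when the associated Jacobian matrix has rank $r$.

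For family (i), the network is a single directed path $c\ce{<->}Y_1\ce{<->}\dots\ce{<->}Y_m\ce{<->}c'$, so there are at most two non-intermediate complexes $c$ and $c'$, and hence at most two reactions in $\cR'$, namely $c\rightarrow c'$ and $c'\rightarrow c$ (one or both may be absent depending on reversibility). I would argue that each of these two reactions forms its own singleton equivalence class, or that the two functions $\phi_{c\rightarrow c'}$ and $\phi_{c'\rightarrow c}$ involve disjoint (or at least independent) collections of parameters; by Corollary~\ref{Corollary Algebraic Independence singletons} the singleton case is immediate. The key observation is that for a linear chain, $\phi_{c\rightarrow c'}$ is built only from the rate constants along the forward direction and $\phi_{c'\rightarrow c}$ from those along the backward direction, so their Jacobian with respect to suitably chosen parameters is triangular of full rank.

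For family (ii), there is a single input complex $c_0$ and several output complexes $c_1,\dots,c_p$ reached from $Y_m$ via rate constants $\ell_1,\dots,\ell_p$. The reactions in $\cR'$ are then $c_0\rightarrow c_j$ for $j=1,\dots,p$, and the crucial structural feature is that $\phi_{c_0\rightarrow c_j}(\kappa)=\ell_j\,\mu_{m,c_0}$, so each function carries its own distinct label $\ell_j$ not appearing in any other. This means the Jacobian with respect to the variables $\ell_1,\dots,\ell_p$ is diagonal with nonzero entries $\mu_{m,c_0}$ (which is nonzero since $c_0$ is an input for $Y_m$ by the path hypothesis and Theorem~\ref{Theorem Systems of Intermediates}(i)), giving full rank $p$. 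Family (iii) is the most involved: each intermediate $Y_i$ may feed into several output complexes $c_{i,1},\dots,c_{i,t_i}$ via labels $\ell_{i,1},\dots,\ell_{i,t_i}$, all emanating from a reversible chain $c_0\ce{<->}Y_1\ce{<->}\dots\ce{<->}Y_m$. I would again exploit that each reaction $c_0\rightarrow c_{i,j}$ produces a function $\phi$ whose leading dependence is on the unique label $\ell_{i,j}$, so the Jacobian taken with respect to all the $\ell_{i,j}$ variables is diagonal (or block-diagonal) of full rank, provided the relevant coefficients $\mu_{i,c_0}$ are nonzero — which holds because the reversible chain guarantees $c_0$ is an input for every $Y_i$.

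The hard part will be handling case (iii) cleanly, since the overlap structure among the reactions $c_0\rightarrow c_{i,j}$ need not decompose into singletons: several outputs from the same or neighboring intermediates may lie in one overlap class, so Corollary~\ref{Corollary Algebraic Independence singletons} does not directly apply and one must genuinely check rank within each class. The essential insight that makes this tractable is that, even within a single equivalence class, the distinct exit labels $\ell_{i,j}$ serve as independent coordinates: differentiating $\phi_{c_0\rightarrow c_{i,j}}$ with respect to $\ell_{i,j}$ isolates $\mu_{i,c_0}\neq 0$ while differentiation with respect to $\ell_{i',j'}$ for $(i',j')\neq(i,j)$ vanishes. This reduces the rank computation to verifying nonvanishing of a product of the $\mu_{i,c_0}$, which follows from the connectivity assumption via Theorem~\ref{Theorem Systems of Intermediates}(i). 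I would therefore organize the proof of (iii) around a single determinant or triangularity argument rather than attempting to enumerate the equivalence classes explicitly.
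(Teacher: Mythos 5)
Your overall plan (decompose $\cR'$ via Lemma~\ref{Lemma Algebraic Independece Components}, then verify full rank of a Jacobian within each overlap class) is reasonable and close in spirit to the paper, but the computational claim on which all three cases rest is false. You assert that, since each exit label $\ell_{i,j}$ occurs in exactly one function $\phi_{c_0\rightarrow c_{i,j}}=\ell_{i,j}\,\mu_{i,c_0}$, the Jacobian with respect to the $\ell$-variables is diagonal with nonzero entries $\mu_{i,c_0}$. This treats $\mu_{i,c_0}$ as constant in the $\ell$'s, which it is not: by the Matrix--Tree expression \eqref{eq:mu}, the common denominator $\sum_{\tau\in\theta(\star)}\pi(\tau)$ of all the $\mu_{i,c_0}$ contains the labels of the edges $Y_i\rightarrow\star$, and these labels are exactly the sums of exit rates $\sum_j\ell_{i,j}+\cdots$. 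So every $\phi_{c_0\rightarrow c_{i,j}}$ depends on \emph{all} the exit labels, the off-diagonal entries $\ell_{i,j}\,\partial\mu_{i,c_0}/\partial\ell_{i',j'}$ do not vanish, and in fact the Jacobian restricted to the $\ell$-variables is identically singular in the typical instances of (ii) and (iii). Concretely, for $c_0\xrightarrow{\kappa_1}Y_1$, $Y_1\xrightarrow{\ell_1}c_1$, $Y_1\xrightarrow{\ell_2}c_2$ one has $\phi_{c_0\rightarrow c_j}=\kappa_1\ell_j/(\ell_1+\ell_2)$, and $\phi_{c_0\rightarrow c_1}+\phi_{c_0\rightarrow c_2}=\kappa_1$ is constant in $(\ell_1,\ell_2)$, so the $2\times2$ Jacobian in the $\ell$'s vanishes. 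This is structural, not an edge case: whenever $c_0\rightarrow Y_1$ is the only inflow and no path returns to $c_0$ (case (ii) with a plain chain, case (iii) with the backward reactions $\kappa_{2i}$ absent), flow balance at steady state forces $\sum\phi_{c_0\rightarrow\,\cdot}=\kappa_1$, a linear relation among the $\phi$'s as functions of the exit labels alone; independence can only come from parameters you have frozen, such as the entry rate. Case (i) has the same defect: $\phi_{c\rightarrow c'}$ and $\phi_{c'\rightarrow c}$ do \emph{not} involve disjoint parameter collections (they share the common denominator, which contains almost all rate constants), and in the reversible case the two reactions overlap, so they form a single equivalence class and Corollary~\ref{Corollary Algebraic Independence singletons} does not apply.

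The paper avoids this trap by never differentiating the rational functions directly: using \eqref{eq:mu} it writes all the $\phi$'s of a class over one common denominator and reduces the problem to algebraic independence of the \emph{numerator} polynomials --- in case (iii), $\rho_{i,j}=\ell_{i,j}\sum_{\tau\in\theta(Y_i)}\pi(\tau)$ --- where the argument ``each $\ell_{i,j}$ occurs in exactly one polynomial'' is legitimate and is exploited by an elimination/induction on the index $i$; in case (i) the two numerators are monomials, one containing $\kappa_{2m+1}$ as a factor and the other not. If you wish to keep a Jacobian argument, it can be repaired, but only by differentiating with respect to parameters that occur in the numerators and not in the common denominator: in case (i) the pair $(\kappa_1,\kappa_{2m+2})$ (the two edges leaving $\star$), and in cases (ii)--(iii) the entry rate $\kappa_{c_0\rightarrow Y_1}$ together with a proper subset of the exit labels. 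With respect to such mixed sets of variables one does get a triangular structure of full rank; with respect to the exit labels alone one provably cannot.
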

\begin{proof}
We start by recalling how to find $\mu_{i,c}$ using a labeled digraph (see proof of Theorem 2 of the electronic supplementary material of \cite{Feliu-Simplifying}). For each non-intermediate complex $c$, consider the labeled digraph $\widehat{G}_c$ with vertex set $\{Y_1,\dots,Y_m,\star\}$ and labeled edges $Y_i\xrightarrow{\k_{Y_i\rightarrow Y_j}} Y_j$ if $Y_i\rightarrow Y_j\in \widetilde{\cR}$,   $\star \xrightarrow{ \k_{c\rightarrow Y_i} x^c} Y_i$ if $c\rightarrow Y_i\in \widetilde{\cR}$ and $Y_i\xrightarrow{\beta_i} \star$ with $\beta_i=\sum_{Y_i\rightarrow c'} \k_{Y_i\rightarrow c'}$ if $\beta_i\neq 0$.

For every vertex $v$ of $\widehat{G}_c$  define $\theta(v)$ as the set of all spanning trees rooted at $v$.\footnote{a spanning tree is rooted at $v$ if $v$ is the only vertex with no outgoing edges} Given such a tree $\tau$, let $\pi(\tau)$ be the product of the labels of the edges of $\tau$. Then 
\begin{equation}\label{eq:mu}
\mu_{i,c}=\frac{\sum_{\tau\in\theta(Y_i)}\pi(\tau)}{\sum_{\tau\in\theta(\star)}\pi(\tau)}.
\end{equation}

\medskip
\noindent
	(i) 
	If one of the reactions is irreversible, then the core network consists of exactly one reaction, either $c\rightarrow c'$ or $c'\rightarrow c$,  and the set $\{\phi_{c\rightarrow c'}(\kappa) \mid c\rightarrow c'\in \cR\}$ is algebraically independent over $\R$.  
	
	If all reactions are reversible, we write
			\[c\ce{<=>[\kappa_1][\kappa_2]}Y_1\ce{<=>[\kappa_3][\kappa_4]}Y_2\ce{<=>[\kappa_5][\kappa_6]}\dots\ce{<=>[\kappa_{2m-1}][\kappa_{2m}]}Y_m\ce{<=>[\kappa_{2m+1}][\kappa_{2m+2}]}c',\]
			and we have $\phi_{c'\rightarrow c}(\kappa)=\kappa_2\mu_{1,c'}$, $\phi_{c\rightarrow c'}(\kappa)=\kappa_{2m+1}\mu_{m,c}$. 
			By the expressions for $\mu_{1,c'}$ and $\mu_{m,c}$ in \eqref{eq:mu},  both rational functions have the same denominator and $\k_{2m+1}$ is not part of their numerator. Therefore, algebraic independence of $\kappa_2\mu_{1,c'}$ and $\kappa_{2m+1}\mu_{m,c}$ follows from the algebraic independence of the numerators  of these two rational functions.  Since $\k_{2m+1}$ is a factor of $\phi_{c\rightarrow c'}(\kappa)$ and is not part of the numerator of $\phi_{c'\rightarrow c}(\kappa)$, the two functions  $\phi_{c\rightarrow c'}(\kappa),\phi_{c'\rightarrow c}(\kappa)$ are algebraically independent over $\R$.

\medskip
\noindent
(ii) We have $\phi_{c_0\rightarrow c_i}(\k) = \ell_i\mu_{m,c_0}$ for $i=1,\dots,p$. Thus the set 
$\{\phi_{c_0\rightarrow c_i}\mid 1\leq i\leq p\}$ is algebraically independent over $\R$ if and only if $\{\ell_i\mid 1\leq i\leq p\}$ is, which clearly holds.

\medskip
\noindent
(iii) The reactions of the core network are of the form $c_0\rightarrow c_{i,j}$. We consider the graph $\widehat{G}_{c_0}$ (removing the edges for which there is no reaction):

\bigskip 		
		\[\xymatrix{
			\star\ar@/_3ex/[drrr]_{\kappa_{1}x^{c_0}} & & & & & & \\
			& & &  
					Y_1\ar@<+1pt>@{-^>}[r]^{\kappa_3}\ar@/_4ex/[ulll]^{\kappa_2+\sum_{j=1}^{t_1}\ell_{1,j}} & Y_2\ar@<+1pt>@{-^>}[l]^{\kappa_{4}}    \ar@<+1pt>@{-^>}[r]^{\kappa_5}    \ar@/_5ex/[ullll]_(0.15){\sum_{j=1}^{t_2}\ell_{2,j}} & \dots    \ar@<+1pt>@{-^>}[l]^{\kappa_{6}}  \ar@<+1pt>@{-^>}[r]^{\kappa_{2m-1}} & Y_m  \ar@<+1pt>@{-^>}[l]^{\kappa_{2m}}   \ar@/_8ex/[ullllll]_(0.15){\sum_{j=1}^{t_m}\ell_{m,j}}
		}\]
We have $\phi_{c_0\rightarrow c_{i,j}}(\kappa)=\ell_{i,j}\mu_{i,c_0}$. The denominators of the rational functions  $\mu_{i,c_0}$ as given in \eqref{eq:mu} 
agree. Therefore it is enough to check that the polynomials 
 $\rho_{i,j}:=\ell_{i,j}\sum_{\tau\in\theta(Y_i)}\pi(\tau)$ for all $i,j$ are algebraically independent over $\R$.
		
		For each $1\leq i\leq m$, there exists a spanning tree rooted at $Y_i$ involving an edge of the form $Y_j\rightarrow\star$ only for $j\gneqq i$. 
		Now consider the smallest index $i$ such that there exists a complex $c_{i,j}$. The parameter $\ell_{i,j}$ appears  in a polynomial $\rho_{i_1,i_2}$ only for $i_1=i$.  
Hence the polynomials $\rho_{i_1,i_2}$ are algebraically independent if and only if they are for $i_1>i$. We proceed in the same way now considering the smallest index $k>i$ such that there exists a complex $c_{k,j}$. This process terminates in at most $m$ steps.

\end{proof}

Corollary \ref{Corollary Algebraic Independence singletons} and Lemma \ref{Lemma Algebraic Independence more classes}(i) show that typical 
rational functions arising from realistic networks, such as those built from the  mechanism  in Example \ref{Example Popular Intermediates}, fulfil the algebraic independence condition.

	\bigskip

	\section{Another class of species: enzymes} 
		
	In this final section we consider another class of species for which reduction mechanisms have also been defined, namely enzymes, and study how  Gr\"{o}bner  bases of extended and reduced networks relate.

			 \subsection{Enzymes}
A species $E\in\cS$ is an \emph{enzyme} if for every reaction the stoichiometric coefficient of  $E$ in the reactant and the product agree  \cite{Feliu-Persistence}. This automatically gives that the steady state polynomial of $E$ is identically zero, and implies that the concentration of $E$ is constant in time and only depends on the initial amount $e_0$ of $E$.  
For example, $E$ and $F$ are enzymes in the network of Example \ref{Example MAPK mu reduced}.

The core network obtained by removal of $E$ consists of simply removing $E$ from each side of the reaction
 (this is an example of an embedded network, see \cite{AtomsJoshi}). For example, a reaction 
\begin{equation}\label{eq:enzyme}
 X_1 + E \xrightarrow{\k_1} X_2 + E \qquad \textrm{becomes } \qquad X_1\xrightarrow{k_1} X_2.
 \end{equation}
After fixing the initial amount of enzyme $e_0$,  the steady states of the extended network satisfying that the concentration of $E$ is $e_0$ agree with the steady states of the core network with $k_1= e_0 \k_1$.

This might lead one to think that enzymes are  redundant and that similar properties as those that hold for  intermediates also hold for enzymes. For example, one might think there is an easy way to obtain a Gr\"{o}bner basis of the steady state ideal of the extended network from one of the core network, or that a binomial steady state ideal remains binomial upon removal of intermediates. But this is not the case, as the following examples illustrate.

	\begin{example}\label{Example Simple Enzyme Destroying Binomiality}
	Let $\cN$ be the  network
	\[\xymatrix{2X\ar[r]^{k_1}\ar@/_/[rr]_{k_3} & 3X\ar[r]^{k_2} & X}\]
	A binomial basis of the steady state ideal is $\{-2k_2x^3+(k_1-k_3)x^2\}$. Now consdier the following network by adding one enzyme $E$:
	\[2X \ce{->[\kappa_1]} 3X \ce{->[\k_2]} X \qquad 2X + E \ce{->[\k_3]} X + E.\]
	A reduced Gr\"{o}bner basis of its steady state ideal is $\{x^3-\tfrac{\k_1}{2\k_2}x^2+\tfrac{\k_3}{2\k_2}x^2e\}$, and hence this ideal is not  binomial.
	\end{example}
	
	The previous example suggests the following: Consider a reaction as in \eqref{eq:enzyme}. One might obtain a Gr\"{o}bner basis of the steady state ideal of the extended network by considering a Gr\"{o}bner basis of the steady state ideal of the core network and substituting the parameter  $\k_1$ by $k_1 e$.
	The following example gives a negative answer to this question.
	
	\begin{example}\label{Example Enzymes As Coefficients}
	Let $\cN$ be the following network
	\[
	\xymatrix @C=1.75pc @R=0.1pc { X_1\ar[rd]^{k_1} & & & \\
		& 0 & 3X_1\ar[r]^{k_3} & X_2.\\
		2X_1\ar[ur]_{k_2} & & &
		}\]
	The set of steady state polynomials is 
	\[\{-k_1x_1-2k_2x_1^2-3k_3x_1^3,\,k_3x_1^3\}.\]
	With every arbitrary monomial order on $\mathbb{R}(k)[x]$, the reduced Gr\"{o}bner basis of the steady state ideal is $\{x_1\}$.
	
	Let now $\cN'$ be the extension of $\cN$ via the  enzyme $E$:
	\[
	\xymatrix @C=1.75pc @R=0.1pc { X_1\ar[rd]^{k_1} & & & \\
		& 0 & 3X_1+E\ar[r]^{k_3} & X_2+E.\\
		2X_1\ar[ur]_{k_2} & & &
	}\]
	The set of steady state polynomials of $\cN'$ is
{\small \[	\{-\k_1 x_1-2\k_2 x_1^2-3\k_3x_1e,\,\k_3 x_1^3e\}.
\]}
	The steady state ideal is  different from  $\langle x_1\rangle$. Thus, there is not a monomial order on $\mathbb{R}(\k)[x,e]$ for which the reduced Gr\"{o}bner basis can be obtained from the set $\{x_1\}$ by making the substitution $k_3=\k_3e$.

	\end{example}

	\begin{example}\label{Example Enzymes Linear Binomiality}
	When a binomial basis of the steady state ideal is obtained from linear combinations of the steady state polynomials (see \cite{Conradi-Binomial}), then the steady state ideal of the core network is binomial if and only if that of the extended network is.
	\end{example}

	\subsection*{Acknowledgements}
This work has been supported by the Danish Research Council for Independent Research. We thank Martin Helmer, Ang\'elica Torres and Carsten Wiuf for comments on previous versions of this manuscript.


\begin{thebibliography}{10}
	
	\bibitem{F5Complexity}
	M.~Bardet, J.~Faug\`{e}re, and B.~Salvy.
	\newblock On the complexity of the ${F}_5$ {G}r\"obner basis algorithm.
	\newblock {\em J. Symb. Comput.}, 70:49--70, 2015.
	
	\bibitem{EliminationOrders}
	D.~Bayer and M.~Stillman.
	\newblock A theorem on refining division orders by the reverse lexicographc
	order.
	\newblock {\em Duke Math. J.}, 55(2):321--328, 1987.
	
	\bibitem{MAPK-Multistationarity}
	R.~Bradford, J.~H. Davenport, M.~England, H.~Errami, V.~Gerdt, D.~Grigoriev,
	C.~Hoyt, M.~Ko\v{s}ta, O.~Radulescu, T.~Sturm, and A.~Weber.
	\newblock A case study on the parametric occurrence of multiple steady states.
	\newblock In {\em Proceedings of the International Symposium on Symbolic and
		Algebraic Computation, ISSAC}, pages 45--52. Association for Computing
	Machinery, 2017.
	
	\bibitem{Conradi-Binomial}
	C.~Conradi and T.~Kahle.
	\newblock Detecting binomiality.
	\newblock {\em Adv. Appl. Math.}, 71(C):52--67, 2015.
	
	\bibitem{Cox}
	D.~Cox, J.~Little, and D.~O'Shea.
	\newblock {\em Using Algebraic Geometry}.
	\newblock Springer-Verlag New York, 2nd edition, 2005.
	
	\bibitem{CoxElementary}
	D.~Cox, J.~Little, and D.~O'Shea.
	\newblock {\em Ideals, Varieties, and Algorithms}.
	\newblock Springer International Publishing, 4th edition, 2015.
	
	\bibitem{DGPS}
	W.~Decker, G.~Greuel, G.~Pfister, and H.~Sch\"onemann.
	\newblock {\sc Singular} {4-0-2} --- {A} computer algebra system for polynomial
	computations.
	\newblock \url{http://www.singular.uni-kl.de}, 2016.
	
	\bibitem{Eisenbud-Binomial}
	D.~Eisenbud and B.~Sturmfels.
	\newblock Binomial ideals.
	\newblock {\em Duke Math. J.}, 84(1):1--45, 1996.
	
	\bibitem{F4}
	J.~Faug\`{e}re.
	\newblock A new efficient algorithm for computing {G}r\"{o}bner bases
	(${F}_4$).
	\newblock {\em J. Pure Appl. Algebr.}, 139:61--88, 1999.
	
	\bibitem{feinbergnotes}
	M.~Feinberg.
	\newblock Lectures on chemical reaction networks.
	\newblock Available online at
	\url{http://www.crnt.osu.edu/LecturesOnReactionNetworks}, 1980.
	
	\bibitem{Feliu-Simplifying}
	E.~Feliu and C.~Wiuf.
	\newblock Simplifying biochemical models with intermediate species.
	\newblock {\em J. R. Soc. Interface}, 10:20130484, 2013.
	
	\bibitem{gunawardena-notes}
	J.~Gunawardena.
	\newblock Chemical reaction network theory for in-silico biologists.
	\newblock Available online at \url{http://vcp.med.harvard.edu/papers/crnt.pdf},
	2003.
	
	\bibitem{DistributiveProcessiveMultisitePhosphorylationGunawardeena}
	J.~Gunawardena.
	\newblock Distributivity and processivity in multisite phosphorylation can be
	distinguished through steady-state invariants.
	\newblock {\em Biophys. J.}, 93(11):3828--3834, 2007.
	
	\bibitem{InvariantModelDiscrimination}
	H.~A. Harrington, K.~L. Ho, T.~Thorne, and M.~P.~H. Stumpf.
	\newblock Parameter-free model discrimination criterion based on steady-state
	coplanarity.
	\newblock {\em Proc. Natl. Acad. Sci. U. S. A.}, 109(39):15746--15751, 2012.
	
	\bibitem{MonomialIdeals-HerzogHibi}
	J.~Herzog and T.~Hibi.
	\newblock {\em Monomial Ideals}.
	\newblock Springer-Verlag London, 1st edition, 2011.
	
	\bibitem{AlgebraicGeometryMethods}
	W.~V.~D. Hodge.
	\newblock {\em Methods of Algebraic Geometry}, volume~1.
	\newblock Cambridge University Press, 1st edition, 1953.
	
	\bibitem{AtomsJoshi}
	B.~Joshi and Shiu J.
	\newblock Atoms of multistationarity in chemical reaction networks.
	\newblock {\em J. Math. Chem.}, 51(1):153--178, 2013.
	
	\bibitem{ComplexLinearInvariantsGunawardeena}
	R.~L. Karp, M.~P\'{e}rez~Mill\'{a}n, T.~Dasgupta, A.~Dickenstein, and
	J.~Gunawardena.
	\newblock Complex-linear invariants of biochemical networks.
	\newblock {\em J. Theor. Biol.}, 311(21):130--138, 2012.
	
	\bibitem{Wnt-Matroid}
	A.~L. MacLean, Z.~Rosen, H.~M. Byrne, and H.~A. Harrington.
	\newblock Parameter-free methods distinguish wnt pathway models and guide
	design of experiments.
	\newblock {\em Proc. Natl. Acad. Sci. U. S. A.}, 112(9):2652--2657, 2015.
	
	\bibitem{GeometryOfMultisitePhosphorylationGunawardeena}
	A.~K. Manrai and J.~Gunawardena.
	\newblock The geometry of multisite phosphorylation.
	\newblock {\em Biophys. J.}, 95(12):5533--5543, 2008.
	
	\bibitem{Feliu-Persistence}
	M.~Marcondes~de Freitas, E.~Feliu, and C.~Wiuf.
	\newblock Intermediates, catalysts, persistence, and boundary steady states.
	\newblock {\em J. Math. Biol.}, 2016.
	
	\bibitem{Feliu-Sign}
	S.~M\"{u}ller, E.~Feliu, G.~Regensburger, C.~Conradi, A.~Shiu, and
	A.~Dickenstein.
	\newblock Sign conditions for injectivity of generalized polynomial maps with
	applications to chemical reaction networks and real algebraic geometry.
	\newblock {\em Found. Comput. Math.}, 16(1):69--97, 2016.
	
	\bibitem{p-modular-Noro-Yokoyama}
	M.~Noro and K.~Yokoyama.
	\newblock Verification of {G}r{\"o}bner basis candidates.
	\newblock In {\em Mathematical Software -- ICMS 2014}, pages 419--424. Springer
	Berlin Heidelberg, 2014.
	
	\bibitem{Dickenstein-MESSI}
	M.~P\'{e}rez~Mill\'{a}n and A.~Dickenstein.
	\newblock The structure of {MESSI} biological systems.
	\newblock {\em arXiv:1612.08763}, 2016.
	
	\bibitem{Dickenstein-Toric}
	M.~P\'{e}rez~Mill\'{a}n, A.~Dickenstein, A.~Shiu, and C.~Conradi.
	\newblock Chemical reaction systems with toric steady states.
	\newblock {\em Bull. Math. Biol.}, 74(5):1027--1065, 2012.
	
	\bibitem{ROBBIANO:1985jc}
	L.~Robbiano.
	\newblock {Term Orderings on the Polynomial Ring}.
	\newblock {\em Lecture Notes in Computer Science}, 204(93):513--517, 1985.
	
	\bibitem{Sontag:2001}
	E.~D. Sontag.
	\newblock {Structure and stability of certain chemical networks and
		applications to the kinetic proofreading model of T-cell receptor signal
		transduction}.
	\newblock {\em Institute of Electrical and Electronics Engineers. Transactions
		on Automatic Control}, 46(7):1028--1047, 2001.
	
	\bibitem{TG-rational}
	M.~Thomson and J.~Gunawardena.
	\newblock {{T}he rational parameterization theorem for multisite
		post-translational modification systems}.
	\newblock {\em J. Theor. Biol.}, 261:626--636, 2009.
	
	\bibitem{p-modular-Winkler}
	F.~Winkler.
	\newblock A p-adic approach to the computation of {G}r\"{o}bner bases.
	\newblock {\em J. Symb. Comput.}, 6:287--304, 1988.
	
\end{thebibliography}

	\end{document}